\newtheorem{observation}{Observation}
\newcommand{\tj}[1]{\textcolor{red}{#1}}
\begin{document}
\title{Friend- and Enemy-oriented Hedonic Games with Strangers}
%
%

\author{TJ Schlueter\inst{1}\orcidID{0000-0002-8365-4186} \and
Makoto Yokoo\inst{2}\orcidID{0000-0003-4929-396X}}

%
%
\institute{Union College, Schenectady NY 12308, USA \email{schluett@union.edu} \and
Kyushu University, Fukuoka, Fukuoka Japan
\email{yokoo@inf.kyushu-u.ac.jp}}
%
\maketitle              
\begin{abstract}
We introduce friend- and enemy-oriented hedonic games with strangers (FOHGS and EOHGS respectively), two classes of hedonic games wherein agents are classified as friends, enemies, or strangers under the assumption that strangers will become either friends or enemies ex post facto.
For several notions of stability in FOHGS and EOHGS, we characterize the hardness of verification for possible and necessary stability.
We characterize the hardness of deciding whether possibly and necessarily X stable partitions exist for a given stability notion X.
We prove that necessarily internally stable partitions always exist and provide sufficient conditions for necessary contractual individual stability.
\keywords{Computational Social Choice  \and Model Design \and Coalition Formation Games.}
\end{abstract}
\section{Introduction}




Many real-life situations such as political parties, recreational sports teams, and research groups require individuals to form groups to accomplish their goals.
In such situations, individuals usually have preferences over who they are grouped with.
Hedonic games, formally hedonic coalition formation games, model the partitioning of a pool of individuals (agents) into such groups (coalitions), but are intractable in the general case.
Friend- and enemy-oriented hedonic games
(FOHG and EOHG)
are hedonic games which provide several tractability benefits by assuming that each agent views each other agent as either friend or enemy and that each agent's preferences are determined by the number of friends and enemies in their coalition \cite{dimitrov2006simple}.
We introduce friend-oriented and enemy-oriented hedonic games with \textit{strangers} (FOHGS and EOHGS) as a relaxation of the FOHG and EOHG models so agents do not need to classify \textit{all} other agents as friends or enemies a priori,
but that that strangers will become either friends or enemies after becoming acquainted.

Hedonic game research often focuses on partitions that satisfy certain conditions, such as various notions of \textit{stability}, where agents will not change coalitions ex post facto, given certain restrictions and assumptions.
Due to the uncertainty caused by strangers in our models, existing notions of stability do not fit FOHGS and EOHGS properly.

Our proposed notion of strangers is similar to unknowns in voting theory, which can make the outcome of an election unclear.
\cite{konczak2005voting} proposed possible and necessary winners for voting to handle unknowns. 
Informally, a possible winner may win an election if enough unknown votes go in their favor, and a necessary winner wins even if all unknown votes go against them.
All necessary winners are possible winners, but possible winners may not be necessary winners.

We adapt the notions of possible and necessary winners in voting to possibly and necessarily stable partitions for FOHGS and EOHGS.
A partition is \textit{possibly X stable} for stability notion X if there exists a resolution of stranger relationships that makes the partition X stable, whereas it is \textit{necessarily X stable} if it will be X stable regardless of how stranger relationships are resolved.
Similar to voting, all necessarily stable partitions are possibly stable, but not the other way around.

Our first results investigate the complexity of possible and necessary stability verification.
We make some straightforward
observations regarding possible stability existence.
Then, we provide a mix of positive and negative results characterizing necessary stability existence.
Most positive results are for symmetric FOHGS and EOHGS, while most results for asymmetric FOHGS and EOHGS are negative.

There are exponentially many ways for the strangers in a given FOHGS or EOHGS instance to be resolved.
If, in the worst case, verification of some variation of possible or necessary stability requires all resolutions of strangers to be checked, then that verificaiton problem is intractable.
We prove that only 2 resolutions of strangers need to be checked, leading to several positive tractability results.
In FOHG and EOHG partitions satisfying several stability notions (e.g. core stability) are guaranteed to exist. Symmetric FOHG and EOHG provide additional existence guarantees.
While it is not surprising that some stability guarantees are lost when moving to a more general model, we find it surprising that even contractual individual stability cannot be guaranteed without some graph restrictions while necessary individual stability and necessary core stability cannot be guaranteed even with rather strong graph restrictions.

\section{Related Work}

The study of hedonic games was initiated by Dreze and Greenberg; later codified by Banerjee et al. and Bogomolnaia and Jackson \cite{DrezeGreenbergHC1980,banerjee2001core,bogomolnaia2002stability}. 
Peters and Elkind  provide an overview of complexity results in several classes of hedonic games \cite{peters2015simple}. 
Dimtrov et al. were the first to propose FOHG and EOHG \cite{dimitrov2006simple},
which our work extends.


Dimtrov et al. showed that a partition of strongly connected components is always strictly core stable (SCS) for FOHG and such partitions can be computed in polynomial time (P-time) for any FOHG \cite{dimitrov2006simple}; they also proved that a core stable partition always exists for EOHG \cite{dimitrov2006simple}.
Sung and Dimitrov later proved that verifying whether a partition is SCS or CS for EOHG is coNP-complete \cite{SUNG2007155}.
Rey later proved the SCS existence problem for EOHG to be DP-hard. \cite{rey2016beyond}.
Chen et al. proved that the CS and SCS verification problems are coNP-complete for FOHG, EOHG \cite{chenhedonic23}.


Bogomolnaia and Jacksomn proved that Nash stable (NS) partitions always exist for symmetric  additively separable hedonic games (ASHG) \cite{bogomolnaia2002stability}; thus
individually stable (IS) and contractual individually stable (CIS) partitions also always exist.
Since FOHG and EOHG are subclasses of symmetric ASHG, Bogomolnaia and Jackson's findings mean that NS, IS, and CIS partitions always exist for FOHG and EOHG.
Sung and Dimitrov proved that the NS and IS existence problems for asymmetric ASHG are NP-complete \cite{sung2010computational}; 
however, these results don't necessarily hold for FOHG and EOHG.
Brandt et al. \cite{brandt2022single} showed that NS existence is NP-complete for any subclass of asymmetric ASHG with only one positive and one negative utility value \cite{brandt2022single}.
They also proved that IS partitions always exist and can be found in P-time for asymmetric ASHG with at most one non-negative utility value \cite{brandt2022single}.
Chen et al. extended the findings of Brandt et al. by showing that the Nash existence problem remains NP-complete, even when the friendship graph is planar \cite{chenhedonic23}.
Dimtrov et al. have shown that a partition of strongly connected components is always SCS for FOHG.  \cite{dimitrov2006simple}. However, the authors made no claims that such partitions are the \textit{only} SCS partitions.


Many extensions have been proposed for FOHG and EOHG since their introduction.
Ota et al. proposed FOHG/EOHG with \textit{neutrals}, which have  zero impact on utility and showed that, after the introduction of \textit{neutrals}, the CS and SCS existence problems are $\Sigma_2^P$-complete \cite{ota2017core}.
Our work differs in that strangers have a non-zero impact on utility.
Barrot et al. proposed an extension where agents can be categorized as friend, enemy, or \textit{unknown} with unknowns having either an $\epsilon$ positive or negative impact on utility \cite{barrot2019unknown}.
In our work strangers have a larger than $\epsilon$ impact on utility and agents do not necessarily resolve all their stranger relations.
Nguyen et al. proposed a variation of the FOHG model where agents altruistically consider the utility that their friends derive from their coalition \cite{nguyen2016altruistic}.
This model was extended by Schlueter and Goldsmith \cite{schlueter2020super} to consider all agents in the coalition, and by Kerkmann and Rothe to the more general space of coalition formation games \cite{schlueter2020super,kerkmann2020altruism}.
We assume in our models that agents act exclusively in their own self-interest.

Skibski et al. proposed signed graph games, a class of games where edges between agents can be positive or negative, and feasible coalitions must not contain any pair of agents connected by a negative edge, and investigate various semivalues and the Owen value for such games \cite{skibski2020signed}.
These games have transferable utility, whereas all hedonic games have strictly nontransferable utility.

Lang et al. proposed an adaptation of possible and necessary stability for hedonic games with ordinal preferences and thresholds \cite{lang2015representing}; 
games where each agent partitions other agents into friends, enemies, and neutrals, then provides a weak ordering of friends and enemies.
In their model, possible/necessary stability is based on the Bossong–Schweigert extension of agents' preferences \cite{lang2015representing};
if an extension exists such that a partition is X stable, then it is possibly X stable, and if it remains X stable for all possible preference extensions, then it is necessarily X stable.
Kerkmann et al. later expanded this work with results for additional stability notions and for some notions of optimality \cite{kerkmann2020hedonic}.
Friends and enemies are not ranked in our model and the model proposed by Lang et al. does not capture the notion of strangers that ours does \cite{lang2015representing}.
Further, their proposed notions of possible and necessary stability are incompatible with our model and vice-versa.


\section{Preliminaries}

Given a game with a set of agents $N$, a \textbf{coalition} is some set of agents $C\subseteq N$ and a \textbf{partition} is a set of coalitions $\gamma$ such that $\bigcup_{C\in \gamma} C = N$ and $\forall C\neq C'\in \gamma$, $C\cap C'= \emptyset$. Next, we define classes of games relevant to our work.


\begin{definition}
\label{def:HG}
\cite{banerjee2001core,bogomolnaia2002stability} \textbf{Hedonic games} are coalition formation games with nontransferable utility where agents' preferences can be expressed as a weak ordering of all coalitions which contain them.
A solution for a given hedonic game is a partition $\gamma$ of the game's agents.
Let $\Gamma$ denote the set of all partitions.
Each agent has preferences over all partitions $\gamma \in \Gamma$ based solely upon their assigned coalition, denoted by $\gamma(i)\in \gamma$.
Given agent $i\in N$ and partitions $\gamma, \gamma' \in \Gamma$, $\gamma \succ_i\gamma'$ if $i$ strictly prefers $\gamma$ to $\gamma'$ and $\gamma\succeq_i\gamma'$ if $i$ weakly prefers $\gamma$ to $\gamma'$.
\end{definition}

There are myriad subclasses of hedonic games, so we only define subclasses relevant to our work.

\begin{definition}
\label{def:ASHG}
\cite{banerjee2001core} \textbf{Additively Separable Hedonic Games (ASHG)} are hedonic games where all agents $i\in N$ assign utility values to other agents $j\in N\setminus\{i\}$, denoted by $v_i(j)$.
An agent's preference come from sum of the utility values assigned to other agents in their coalition.
For some $i\in N$ $u_i(\gamma)=\sum_{j\in\gamma(i)\setminus\{i\}}v_i(j)$ denotes the utility $i$ gets from coalition $\gamma(i)\in\gamma$; $u_i(C)$ can be used for any coalition $C\subseteq N$.
Given agent $i$ and $\gamma,\gamma'\in \Gamma$, $\gamma\succ_i\gamma'$ if $u_i(\gamma) > u_i(\gamma')$ and $\gamma\succeq_i\gamma'$ if $u_i(\gamma) \geq u_i(\gamma')$.
\end{definition}

A common restriction for ASHG and their derivations is symmetry, which we define below.

\begin{definition}
An ASHG is \textbf{symmetric} if and only if:
$\forall i,j\in N,\mbox{ } v_i(j) = v_j(i).$
\end{definition}

ASHG instances that are not symmetric are referred to as \textbf{asymmetric} or \textbf{general-case}.
We now introduce two restricted subclasses of ASHG.

\begin{definition}
\label{def:FEOHG}
\cite{dimitrov2006simple} \textbf{Friend-oriented Hedonic Games (FOHG)} are ASHG where all agents $i\in N$ categorize other agents $j\in N\setminus\{i\}$, as either friends or enemies with friends valued at $|N|$ and enemies $-1$.
\textbf{Enemy-oriented Hedonic Games (EOHG)} analogous, but with friends valued at $1$ and enemies $-|N|$.

\end{definition}


In FOHG agents strictly prefer any coalition with more friends over a coalition with fewer friends.
In EOHG agents strictly prefer any coalition with less enemies over a coalition with more enemies.
Since a single enemy guarantees that an agent will derive negative utility, all individually rational (IR) partitions are composed solely of coalitions of cliques on the friendship graph for EOHG.

FOHG and EOHG are often represented with unweighted graphs where agents are nodes and an edge between two agents indicates friendship.
Given a FOHG or EOHG, let $F$ denote the set of all friendship edges and $E$ denote the set of all pairs $(i,j)\notin F$; let $F_i$ and $E_i$ denote the sets of friends and enemies respectively for agent $i\in N$.
Given a coalition $C$, the utility agent $i\in C$ derives from $C$ can be computed as $\sum_{\forall j\in C\setminus\{i\}}v_i(j)$ or as $|N|\cdot|F_i\cap C| - |E_i\cap C|$ for FOHG and $|F_i\cap C| - |N|\cdot|E_i\cap C|$ for EOHG.
Next, we define notions of stability that focus on the deviation of individual agents.




\begin{definition}
\label{def:individual_rationality}
\cite{rothe2015economics_Chapter3}
A partition $\gamma$ is \textbf{individually rational (IR)} if $\forall i\in N$, $\gamma(i) \succeq_i \{i\}$.
\end{definition}

IR is a common baseline of stability for hedonic games as it ensures that no agent becomes worse-off by participating.

\begin{definition} \label{def:Nash_stab}
\cite{bogomolnaia2002stability} A partition $\gamma$ is \textbf{Nash stable (NS)} if $\nexists i\in N,$ $C\in \gamma \cup \{\emptyset\}$ such that $i\notin C$ and $C\cup\{i\} \succ_i \gamma(i)$.
\end{definition}

NS guarantees that unilateral deviation benefits no one.

\begin{definition} \label{def:indiv_stab}
\cite{bogomolnaia2002stability} A partition $\gamma$ is \textbf{individually stable (IS)} if $\nexists i\in N,$ $C\in \gamma \cup \{\emptyset\}$ such that $i\notin C$; $C\cup\{i\} \succ_i \gamma(i)$; and $\forall j\in C$, $C\cup\{i\} \succeq_j C$.
\end{definition}

IS builds on NS by requiring permission from existing members before an agent can join a coalition; 
NS and IS guarantee IR.

\begin{definition} \label{def:contract_indiv_stab}
\cite{bogomolnaia2002stability}  A partition $\gamma$ is \textbf{contractually individually stable (CIS)} if $\nexists i\in N,$ $C\in \gamma \cup \{\emptyset\}$ such that $i\notin C$; $C\cup\{i\} \succ_i \gamma(i)$; $\forall j\in C$, $C\cup\{i\} \succeq_j C$; and $\forall k\in \gamma(i)\setminus\{i\},$ $\gamma(i)\setminus\{i\}\succeq_k \gamma(i)$.
\end{definition}

CIS builds on IS by requiring permission for an agent to leave their current coalition.
CIS does not guarantee IR.
We now define stability notions where multiple agents can deviate simultaneously.



\begin{definition} \label{def:core_stab}
\cite{bogomolnaia2002stability} Coalition $C$ \textbf{blocks} a partition $\gamma$ if $\forall i \in C$, $C \succ_i \gamma(i)$.
Partition $\gamma$ is \textbf{core stable (CS)} if no non-empty coalition blocks it.
\end{definition}

A strengthening of the ideas in CS was proposed by \cite{dimitrov2006simple}. 

\begin{definition} \label{def:strict_core_stab}
\cite{dimitrov2006simple} Coalition $C$ \textbf{weakly blocks} partition $\gamma$ if $\forall i \in C$, $C \succeq_i \gamma(i)$ and $\exists j\in C$ such that $C \succ_j \gamma(j).$
Partition $\gamma$ is \textbf{strictly core stable (SCS)} if no coalition weakly blocks it.
\end{definition}

A final relevant stability notion was proposed by Schlueter and Goldsmith, built on work by Alcalde and Romero-Medina, Dimitrove et al., and Taywade et al \cite{alcalde2006coalition,dimitrov2006simple,schlueter2020internal,taywade2020decentralized}.

\begin{definition} \label{def:int_stab}
\cite{dimitrov2006simple,alcalde2006coalition} A coalition $C$ is \textbf{internally stable} if $\nexists D\subset C$ such that $\forall i\in D$, $D\succ_i C$.
\cite{schlueter2020internal} A partition $\gamma$ is \textbf{internally stable (INS)} if all coalitions $C\in \gamma$ are internally stable.
\end{definition}

\subsection{New Model Definitions}

We end our preliminaries with the introduction of the new model and related notions of possible and necessary stability.

\begin{definition}
\label{def:FEOHGS}
\textbf{Friend-oriented Hedonic Games with Strangers (FOHGS)} are an extension of FOHG where agents are categorized a priori as friend, enemy, or stranger.
Friends and enemies in FOHGS are assigned the same utility values as they are in FOHG, while strangers become either friends or enemies ex post facto.
Similar to FOHG and EOHG, we denote the set of friend edges with $F$, but we add another set $S$ to denote the set of all $(i,j)$ where $i\in N$ categorizes $j\in N:i\neq j$ as a stranger.
The set $E$ for FOHGS and EOHGS denotes the set of pairs $(i,j)$ such that $(i,j)\notin F \wedge (i,j)\notin S$.
We let $F_i$, $E_i$, $S_i$ denote the sets of friends, enemies, and strangers respectively for some agent $i\in N$.
For a FOHGS $G$, partition $\gamma$, and agent $i \in N$, $u_i(\gamma(i)) = |N|\cdot |F_i\cap \gamma(i)| - |E_i\cap \gamma(i)| + \sum_{j\in S_i\cap \gamma(i)} v_i(j)$ where 
\[ v_i(j) = 
\left\{ \begin{array}{lr}
    |N| & \text{if } (i,j) \text{ becomes friendship}\\
    -1 & \text{if } (i,j) \text{ becomes enmity}
\end{array} \right\} .\]

\textbf{Enemy-oriented Hedonic Games with Strangers} are defined analogously wrt EOHG.
Thus for an EOHGS $G$, partition $\gamma$, and agent $i \in N$, $u_i(\gamma(i)) = |F_i\cap \gamma(i)| - |N|\cdot |E_i\cap \gamma(i)| + \sum_{j\in S_i\cap \gamma(i)} v_i(j)$ where 
\[ v_i(j) = 
\left\{ \begin{array}{lr}
    1 & \text{if } (i,j) \text{ becomes friendship}\\
    -|N| & \text{if } (i,j) \text{ becomes enmity}
\end{array} \right\} .\]
A FOHGS or EOHGS instance $G$ can be defined by the tuple $(N, F, E, S)$.
\end{definition}

FOHGS and EOHGS can be symmetric or asymmetric.
We now define possible and necessary stability in the context of FOHGS and EOHGS.


\begin{definition}
Given a FOHGS or EOHGS instance $G$, a \textbf{resolution} $r$ of strangers in $G$ converts each stranger relation into friendship or enmity.
A partition $\gamma$ is \textbf{possibly X stable (P-X)} for stability notion X (e.g. NS) if $\gamma$ is X stable for some resolution of strangers in $G$.
Partition $\gamma$ is \textbf{necessarily X stable (N-X)} if $\gamma$ is X stable for all possible resolutions of strangers in $G$.
We use $u_i^r(C)$ to denote the utility $i$ gets from a coalition $C$ under resolution $r$.
\end{definition}



Before presenting our formal results, we propose a trivial case where all agents are strangers to each other, which applies to both FOHGS and EOHGS.
In this case, all possible partitions are possibly stable for many stability notions, notably for SCS, NS, and stability notions implied by them.
In contrast, no partition is necessarily stable for many stability notions - particularly for necessary CIS and INS and all stability notions that imply either of these notions (e.g. NS, SCS).
Our existence results are focused on nontrivial instances.

Next, we present results for FOHGS and EOHGS; first results for possible and necessary stability verification; confirming that a given partition satisfies a specific stability notion.
We then present informal possible stability existence results and technical necessary existence results; checking whether \textit{any} partition possibly or necessarily satisfies a specific stability notion.

\section{Verification of Possible/Necessary Stability}
Our first formal results relate to the tractability of verifying possible and necessary stability.
For any FOHGS or EOHGS instance $G = (N, F, E, S)$ there are $2^{|S|}$ resolutions of strangers.
Given a partition $\gamma$ for $G$, checking all possible resolutions of $S$ to verify possible or necessary X stability is intractable.
We show that it is possible to verify that coalition $C$ possibly blocks  partition $\gamma$ in P-time.




\begin{theorem}
\label{thm:FOHGS_OnlyExtremes}
A coalition $C$ can be verified to possibly block a partition $\gamma$ for a FOHGS or EOHGS instance $G$ in P-time.
\end{theorem}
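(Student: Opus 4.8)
The plan is to avoid searching the $2^{|S|}$ resolutions and instead reduce possible blocking to a single, canonically constructed resolution. By Definition~\ref{def:core_stab}, a resolution $r$ makes $C$ block $\gamma$ exactly when $u_i^r(C) > u_i^r(\gamma(i))$ for every $i \in C$; hence $C$ possibly blocks $\gamma$ iff some resolution makes all $|C|$ of these strict inequalities hold at once. I would construct one resolution $r^\ast$ that \emph{simultaneously} maximizes the gap $g_i(r) := u_i^r(C) - u_i^r(\gamma(i))$ for every $i \in C$, and then just test whether $g_i(r^\ast) > 0$ for all $i$. Correctness is then immediate: if even the gap-maximizing resolution fails some agent, no resolution can satisfy that agent, so $C$ cannot possibly block.

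First I would pin down, for a fixed $i \in C$, how $g_i$ depends on $i$'s strangers. The friend/enemy contributions to $u_i^r(C)$ and $u_i^r(\gamma(i))$ are fixed, so only $j \in S_i$ matter. Such a $j$ contributes $+v_i^r(j)$ to $g_i$ if $j \in C \setminus \gamma(i)$, contributes $-v_i^r(j)$ if $j \in \gamma(i) \setminus C$, and contributes $0$ if $j \in C \cap \gamma(i)$ or $j \notin C \cup \gamma(i)$ (where the stranger either cancels or is absent from both coalitions). Since in both FOHGS and EOHGS the friend value strictly exceeds the enemy value, $g_i$ is maximized by resolving every $(i,j)$ with $j \in C \setminus \gamma(i)$ to friendship and every $(i,j)$ with $j \in \gamma(i) \setminus C$ to enmity; the remaining relations do not affect $g_i$ and may be set arbitrarily.

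The crux is showing these per-agent choices never conflict, so a single global $r^\ast$ attains every agent's maximum at once. In the asymmetric case this is immediate: the directed relation $(i,j)$ enters only $u_i$, hence only $g_i$, so distinct agents constrain disjoint relations. In the symmetric case the only relation shared by two agents' gaps is an undirected stranger pair $\{i,j\}$ with both $i,j \in C$. For such a pair, $j \in \gamma(i)$ is equivalent to $i \in \gamma(j)$ (both say $\gamma(i)=\gamma(j)$), so either both agents are indifferent to the pair (when $\gamma(i)=\gamma(j)$, as it then lies in $C \cap \gamma(\cdot)$ and cancels) or both strictly prefer friendship (when $\gamma(i)\neq\gamma(j)$). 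Thus no relation is wanted-friend by one agent and wanted-enemy by another, and $r^\ast$ is well defined.

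Finally I would package this into an algorithm: build $r^\ast$ by inspecting each stranger once against $C$ and the relevant $\gamma(i)$, compute $u_i^{r^\ast}(C)$ and $u_i^{r^\ast}(\gamma(i))$ for all $i \in C$ via the closed forms of Definition~\ref{def:FEOHGS}, and report possible blocking iff every strict inequality holds. Every step is polynomial in $|N|$ and $|S|$. I expect the main obstacle to be the symmetric-case compatibility argument of the third paragraph: one must verify carefully that the two endpoints of an intra-$C$ stranger pair always agree, which is exactly where the symmetry of ``co-located in $\gamma$ or not'' does the work.
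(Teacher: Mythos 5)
Your proof is correct, and its computational core coincides with the paper's: both resolve each stranger in $C \setminus \gamma(i)$ to friendship and each stranger in $\gamma(i) \setminus C$ to enmity, and both observe that strangers in $C \cap \gamma(i)$ cancel, so your test $g_i(r^\ast) > 0$ is essentially the paper's case-3 inequality $u_i^{r^+}(C\setminus \gamma(i)) > u_i^{r^-}(\gamma(i)\setminus C)$. The architecture differs, though. The paper first proves a utility-extremes lemma (Lemma~\ref{lem:FOHGS_MinMax}: $r^-$ minimizes and $r^+$ maximizes utility) and then runs a three-case analysis ($C \in \gamma$; $C \subset C' \in \gamma$; the general case), precisely because naively comparing the optimistic value of $C$ against the pessimistic value of $\gamma(i)$ falsely flags coalitions that are part of, or nested inside, existing coalitions. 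Your gap function $g_i(r) = u_i^r(C) - u_i^r(\gamma(i))$ dissolves these cases into a single argument: when $C = \gamma(i)$ the gap is identically zero, and when $C \subset \gamma(i)$ the maximized gap reduces (in the FOHGS case) to the paper's friendship-disjointness condition on $\gamma(i)\setminus C$. You also make explicit a point the paper glosses over: that the per-agent optimal choices are mutually consistent, i.e., that they assemble into one global resolution $r^\ast$. Your observation that the two endpoints of an intra-$C$ stranger pair in the symmetric case never disagree (both indifferent when $\gamma(i)=\gamma(j)$, both preferring friendship otherwise) is exactly what licenses concluding that a single resolution witnesses blocking for all agents simultaneously; the paper's per-agent checks implicitly require the same fact. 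So your route is somewhat tighter on rigor, while the paper's route buys a standalone lemma (the $r^-$/$r^+$ extremes) that it reuses in several later proofs, e.g., Theorem~\ref{thm:FEOHGS_PNCISIS_P}.
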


The proof for Theorem \ref{thm:FOHGS_OnlyExtremes} relies on the following lemma and can be found in Appendix A.


\begin{lemma}
\label{lem:FOHGS_MinMax}
Given a FOHGS or EOHGS $G$ and coalition $C\subseteq N$, $\forall i\in C$, let $r^-$ and $r^+$ be the resolutions where all strangers become enemies and friends respectively.
The minimum utility $i$ can get from $C$ is $u_i^{r^-}(C)$, while $u_i^{r^+}(C)$ is the maximum.
\end{lemma}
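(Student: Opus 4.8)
The plan is to split each agent's utility from $C$ into a resolution-independent part and a stranger-dependent part, and then show that the stranger-dependent part is maximized termwise by $r^+$ and minimized termwise by $r^-$. First I would fix a coalition $C$ and an agent $i \in C$ and observe that in the FOHGS utility the quantity $|N|\cdot|F_i\cap C| - |E_i\cap C|$ (and the analogous $|F_i\cap C| - |N|\cdot|E_i\cap C|$ for EOHGS) is a constant $c_i$ that does not change with the resolution, since $F_i$ and $E_i$ are the fixed a priori friend and enemy sets and only $S_i$ is resolved. Hence for any resolution $r$ we may write $u_i^r(C) = c_i + \sum_{j\in S_i\cap C} v_i^r(j)$, where $v_i^r(j)$ denotes the value of $v_i(j)$ under $r$, and only this final sum can vary.

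Next I would exploit the additive separability of that sum: each stranger $j \in S_i\cap C$ contributes a single term $v_i^r(j)$ whose value is selected independently of how any other stranger is resolved. In FOHGS this term equals $|N|$ if $j$ is resolved to friendship and $-1$ if to enmity, and since $|N| > -1$ it is maximized by the friend choice and minimized by the enemy choice. In EOHGS the term equals $1$ or $-|N|$, and again $1 > -|N|$, so the same monotonicity holds. Because the contributions are independent, the whole sum $\sum_{j\in S_i\cap C} v_i^r(j)$ attains its maximum exactly when every stranger of $i$ in $C$ becomes a friend, and its minimum exactly when every such stranger becomes an enemy.

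Finally I would conclude by noting that the uniform global resolution $r^+$ sends every stranger to friendship, hence realizes the termwise maximum of the stranger sum for agent $i$, while $r^-$ realizes the termwise minimum; adding back the constant $c_i$ gives $u_i^{r^+}(C) \geq u_i^r(C) \geq u_i^{r^-}(C)$ for every resolution $r$. Since $i \in C$ was arbitrary, this establishes the claim for all $i \in C$. It is worth emphasizing that the \emph{same} pair of resolutions $r^+, r^-$ simultaneously extremizes $u_i(C)$ for every agent $i$, because pushing all stranger edges in a single direction extremizes each $v_i(j)$ at once; this is exactly what later lets us check only two resolutions.

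The argument carries no genuine obstacle, as it is a clean monotonicity-plus-separability computation. The only point that deserves care is verifying that the per-agent extremes are \emph{jointly achievable} by a single resolution rather than merely by agent-specific ones: this follows precisely from separability, since each stranger's optimal direction is independent of how the others are resolved, so no trade-offs arise and the uniform resolutions $r^+$ and $r^-$ suffice.
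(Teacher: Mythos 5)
Your proposal is correct and rests on the same idea as the paper's proof: both decompose $u_i^r(C)$ into a resolution-independent part plus the sum over stranger edges, and both use the fact that each stranger's contribution is larger under friendship ($|N| > -1$, resp.\ $1 > -|N|$). The only difference is presentational --- the paper argues by contradiction via single-edge flips (showing that turning one enemy-resolved stranger into a friend cannot decrease utility), whereas you bound the sum termwise directly, which is a cleaner rendering of the same monotonicity-plus-separability argument.
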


Intuitively, it follows that converting all strangers into enemies produces the lowest utility, while converting strangers into friends maximizes utility; see Appendix A for the full proof of the lemma. Building upon Lemma \ref{lem:FOHGS_MinMax}, we move into the proof of Theorem \ref{thm:FOHGS_OnlyExtremes}.

\begin{proof}
Lemma 1 established that an agent's utility is maximized under resolution $r^+$ and minimized under resolution $r^-$. In order for coalition $C \subseteq N$ to possibly block partition $\gamma$, there must be a resolution $r$ such that $C$ provides each of its members $i\in C$ with more utility than they get from $\gamma(i)$ under the same resolution.
More than one such $r$ \textit{could} exist, but if they do, then we will find that $\forall i\in C$, $u_i^{r^+}(C) > u_i^{r^-}(\gamma(i))$.
But this finding alone is insufficient to prove that $C$ is a blocking coalition.
Suppose that $C$ was an existing coalition $C'\in \gamma$ and that $\forall i\in C$ (equivalently $C'$) $S_i\cap C\neq \emptyset$.
All agents in $C$ have at least one stranger, $C$ will appear to block $\gamma$ despite being a part of $\gamma$.
Similar problems arise when $C \subset C'\in \gamma$,
though in such cases, it may be possible that $C$ is disjoint from $C'$ on the friendship graph --- in which case, the agents of $C$ \textit{can} benefit by leaving $C'$.
Thus, there are three cases of note for how any $C\subseteq N$ relates to existing coalitions in $\gamma$.
\begin{enumerate}
    \item $C$ is part of $\gamma$, thus $C$ cannot block $\gamma$.
    \item $C\subset C'\in \gamma$. If $\forall i\in C$, $u_i^{r^-}(C) > u_i^{r^-}(C')$, then $C$ is disjoint from $C' \setminus C$ on the friendship graph.
    Thus, under $r^-$ or any other $r$ where all stranger edges between $C$ and $C'\setminus C$ become enmity, all $i\in C$ benefit if $C$ separates from $C'$.
    Thus $C$ is a possibly blocking coalition.
    \item $C$ is not an existing member of $\gamma$ or subset of some $C'\in \gamma$.
    In this case $\forall i\in C$, we must check whether $u_i^{r^+}(C\setminus \gamma(i)) > u_i^{r^-}(\gamma(i)\setminus C)$.
    Intuitively, stranger relations exclusively between $i$ and $C$ resolve positively, stranger relations exclusively between $i$ and $\gamma(i)$ resolve negatively, and other stranger relations are irrelevant.
    If this holds for all $i\in C$, then $C$ possibly blocks $\gamma$.
\end{enumerate}

In order to reach the condition 
$u_i^{r^+}(C\setminus \gamma(i)) > u_i^{r^-}(\gamma(i)\setminus C),$
we compute the following $\forall i\in C$:
$u_i^{r^+}(C\cap \gamma(i))$, $u_i^{r^+}(C\setminus \gamma(i))$, and $u_i^{r^-}(\gamma(i)\setminus C)$.
To compare $C$ and $\gamma(i)$, we start with the following inequality based on the above values: $u_i^{r^+}(C\cap \gamma(i)) + u_i^{r^+}(C\setminus \gamma(i)) > u_i^{r^+}(C\cap \gamma(i)) + u_i^{r^-}(\gamma(i)\setminus C)$.
We can subtract $u_i^{r^+}(C\cap \gamma(i))$ from from both sides to find the simplified inequality $u_i^{r^+}(C\setminus \gamma(i)) > u_i^{r^-}(\gamma(i)\setminus C)$.
Thus, $C$ can be verified to block $\gamma$ in P-time.

Analogous steps can be constructed for EOHGS by swapping the magnitude of positive and negative utility values.
Thus, it can be verified whether a coalition $C$ blocks a partition $\gamma$ of some EOHGS instance in polynomial time.
\qed
\end{proof}
We find similar results for verification of necessarily blocking coalitions.

\begin{proposition}
\label{prop:FEOHGS_PN_blocking}
A coalition can be verified to necessarily block a partition $\gamma$ for a FOHGS or EOHGS instance $G$ in P-time.
\end{proposition}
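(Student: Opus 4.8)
The plan is to mirror the argument for Theorem \ref{thm:FOHGS_OnlyExtremes}, but with the roles of the extreme resolutions $r^+$ and $r^-$ interchanged, so that necessary blocking becomes the dual of possible blocking. By definition, $C$ necessarily blocks $\gamma$ exactly when $C$ blocks $\gamma$ under \emph{every} resolution $r$, i.e.\ when $u_i^r(C) > u_i^r(\gamma(i))$ holds for all $i\in C$ and all $r$. Since both quantifiers are universal, I would reorder them and handle each agent $i\in C$ separately, reducing the task to deciding, for one fixed $i$, whether $u_i^r(C) > u_i^r(\gamma(i))$ holds for all resolutions $r$.

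First I would decompose each side along $\gamma(i)$, as in the theorem. Writing $u_i^r(C)=u_i^r(C\cap\gamma(i))+u_i^r(C\setminus\gamma(i))$ and $u_i^r(\gamma(i))=u_i^r(\gamma(i)\cap C)+u_i^r(\gamma(i)\setminus C)$, the shared term $u_i^r(C\cap\gamma(i))$ cancels, so $u_i^r(C)>u_i^r(\gamma(i))$ is equivalent to $u_i^r(C\setminus\gamma(i))>u_i^r(\gamma(i)\setminus C)$. The crucial observation is that $C\setminus\gamma(i)$ and $\gamma(i)\setminus C$ are disjoint, hence the stranger edges incident to $i$ that affect the left-hand term are disjoint from those affecting the right-hand term. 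Therefore a single resolution can simultaneously drive the left side to its minimum and the right side to its maximum, and the inequality holds for all $r$ if and only if it holds in this single worst case.

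Next I would invoke Lemma \ref{lem:FOHGS_MinMax} to identify that worst case explicitly: $u_i^r(C\setminus\gamma(i))$ is minimized by $r^-$ and $u_i^r(\gamma(i)\setminus C)$ is maximized by $r^+$. This yields the characterization that $C$ necessarily blocks $\gamma$ if and only if $u_i^{r^-}(C\setminus\gamma(i)) > u_i^{r^+}(\gamma(i)\setminus C)$ for every $i\in C$. Each of these two quantities is computable in P-time from the graph, and there are at most $|C|$ agents to check, so the whole verification runs in P-time; the EOHGS case follows by the same argument after swapping the magnitudes of the positive and negative utility values.

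The step I expect to require the most care is justifying that the per-agent worst case is attained by one resolution, i.e.\ that minimizing $u_i^r(C\setminus\gamma(i))$ and maximizing $u_i^r(\gamma(i)\setminus C)$ are not in tension. This rests on the disjointness of $C\setminus\gamma(i)$ and $\gamma(i)\setminus C$ together with the per-edge additivity of utility underlying Lemma \ref{lem:FOHGS_MinMax}. I would also note that the degenerate cases $C=\gamma(i)$ and $C\subsetneq\gamma(i)$ are handled automatically: then one of $C\setminus\gamma(i)$, $\gamma(i)\setminus C$ is empty and contributes zero utility, so the inequality correctly reports no strict improvement (when $C=\gamma(i)$) or requires every agent of $C$ to view the entire complement $\gamma(i)\setminus C$ as enemies (when $C\subsetneq\gamma(i)$).
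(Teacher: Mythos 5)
Your proof is correct and takes essentially the same route as the paper's: both dualize the possible-blocking check of Theorem \ref{thm:FOHGS_OnlyExtremes} by comparing, for each $i\in C$, the pessimistic value of the new coalition against the optimistic value of the current one, with Lemma \ref{lem:FOHGS_MinMax} plus the disjointness of $C\setminus\gamma(i)$ and $\gamma(i)\setminus C$ guaranteeing that a single adversarial resolution realizes both extremes at once. Your single inequality $u_i^{r^-}(C\setminus\gamma(i)) > u_i^{r^+}(\gamma(i)\setminus C)$ is in fact a cleaner, unified form of the paper's argument, which instead splits off the case $C\subsetneq C'\in\gamma$ with separate combinatorial conditions (all of $C'\setminus C$ being known enemies for FOHGS; strictly more known enemies in $C'$ than in $C$ for EOHGS) — conditions your formula recovers as special cases.
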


\begin{proof}
To check whether $C$ necessarily blocks $\gamma$, we can rely on a modified version of the steps used to check whether whether $C$ possibly blocks $\gamma$.
When $C$ is a proper subset of an existing coalition $C'$, we must check whether $C$ is necessarily better than $C'$.
For FOHGS this only holds when $C$ is necessarily disjoint from $C'$, equivalently that $\forall i\in C\mbox{, }|E_i\cap (C'\setminus C)| = |(C'\setminus C)|$.
For EOHGS, this holds whenever $C'$ has more known enemies than $C$, equivalently $\forall i\in C$, $|E_i\cap C'| > |E_i\cap C|$.
When $C$ is not a proper subset of an existing coalition, $\forall i\in C$ we compare pessimistic values for $C$ to optimistic values for $\gamma(i)$:
in the FOHGS case $\min{u_i(C)} = |N|\cdot|F_i\cap C| - |E_i\cap C| - |S_i\cap C|$, $\max{u_i(\gamma(i))} = |N|\cdot|F_i\cap \gamma(i)| - |E_i\cap\gamma(i)| - |S_i\cap C^*| + |N|\cdot|S_i\cap \gamma(i)^*|$,
in the EOHGS case$\min{u_i(C)} = |F_i\cap C| - |N|\cdot|E_i\cap C| - |N|\cdot|S_i\cap C|$, $\max{u_i(\gamma(i))} = |F_i\cap \gamma(i)| - |N|\cdot|E_i\cap\gamma(i)| - |N|\cdot|S_i\cap C^*| + |S_i\cap \gamma(i)^*|$.
Thus, $C$ can be verified to necessarily block $\gamma$ in polynomial time for FOHGS and EOHGS.
\qed
\end{proof}

We can also check whether $C$ weakly blocks $\gamma$ in polynomial time.
\begin{proposition}
\label{prop:FEOHGS_PSCS_incoNP}
A coalition $C$ can be verified to possibly or necessarily weakly block a partition $\gamma$ for a FOHGS or EOHGS instance $G$ in P-time.
\end{proposition}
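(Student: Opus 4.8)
The plan is to reduce both tasks to computing, for each member $i \in C$, the minimum and maximum over resolutions $r$ of the utility difference $d_i(r) = u_i^r(C) - u_i^r(\gamma(i))$, and then to check a simple sign condition on these extremes. Recall that $C$ weakly blocks $\gamma$ under $r$ exactly when $d_i(r) \geq 0$ for every $i \in C$ and $d_j(r) > 0$ for at least one $j \in C$. Following the decomposition used in the proof of Theorem \ref{thm:FOHGS_OnlyExtremes}, the strangers in $S_i \cap (C \cap \gamma(i))$ contribute equally to $u_i^r(C)$ and $u_i^r(\gamma(i))$ and cancel in $d_i(r)$, those in $S_i \cap (C \setminus \gamma(i))$ affect only $u_i^r(C)$, and those in $S_i \cap (\gamma(i) \setminus C)$ affect only $u_i^r(\gamma(i))$. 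By Lemma \ref{lem:FOHGS_MinMax}, $d_i$ is maximized by the resolution $r^+_C$ favoring $C$ (friends for $i$'s strangers in $C \setminus \gamma(i)$, enemies for those in $\gamma(i)\setminus C$) and minimized by the opposite resolution $r^-_C$; write $\delta_i = d_i(r^+_C)$ and $\epsilon_i = d_i(r^-_C)$, both P-time computable.

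For possible weak blocking I would first argue, as in Theorem \ref{thm:FOHGS_OnlyExtremes}, that the $C$-favoring demands of distinct members never conflict — even in the symmetric case, a short case analysis on whether the endpoints of a shared stranger edge $\{i,j\}$ lie in $C$ and whether they share a $\gamma$-coalition shows each endpoint places the same (or no) requirement on the edge — so $r^+_C$ is well defined and simultaneously attains $\delta_i$ for all $i$. Consequently $C$ possibly weakly blocks $\gamma$ iff under $r^+_C$ every $\delta_i \geq 0$ and some $\delta_j > 0$: if so, $r^+_C$ itself witnesses a weak block; if some $\delta_i < 0$ then $i$ strictly prefers $\gamma(i)$ under every resolution, and if all $\delta_i = 0$ then no member can ever strictly prefer $C$.

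For necessary weak blocking I would split the requirement into (I) every member weakly prefers $C$ under every resolution, equivalently $\epsilon_i \geq 0$ for all $i$ (checked per agent by the pessimistic resolution, as in Proposition \ref{prop:FEOHGS_PN_blocking} but with a non-strict inequality), and (II) under every resolution some member strictly prefers $C$. I would then show that, given (I), condition (II) holds iff some $\epsilon_i > 0$. The forward direction is immediate, since $\epsilon_i > 0$ makes $i$ strictly prefer $C$ under all resolutions. The backward direction — assuming all $\epsilon_i = 0$ and producing one resolution leaving every member exactly indifferent — is the main obstacle, and is where the symmetric case could a priori fail. I would discharge it with two observations: a short computation showing that when $\epsilon_i = 0$ the \emph{only} resolution of $i$'s relevant strangers giving $d_i(r) = 0$ is the fully pessimistic one (the friendship–enmity gap of $|N|+1$, common to both models, forces all of $C \setminus \gamma(i)$ to enmity and all of $\gamma(i) \setminus C$ to friendship); and a case analysis showing these per-agent pessimistic demands never conflict on a shared edge, so they extend to a global resolution making everyone indifferent, defeating weak blocking. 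Hence $C$ necessarily weakly blocks iff all $\epsilon_i \geq 0$ and some $\epsilon_i > 0$, which is P-time.

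Finally I would note that the argument is insensitive to the exact magnitudes: replacing the FOHGS values by the EOHGS values $1$ and $-|N|$ changes only the arithmetic in the extreme-resolution and uniqueness computations (the decisive factor remains $|N|+1$), while the conflict-freeness case analysis depends only on $C$ and $\gamma$. The main difficulty throughout is isolating and resolving the symmetric all-indifferent case of (II); every other step mirrors Lemma \ref{lem:FOHGS_MinMax}, Theorem \ref{thm:FOHGS_OnlyExtremes}, and Proposition \ref{prop:FEOHGS_PN_blocking}.
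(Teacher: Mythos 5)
Your proof is correct, and its skeleton is the same as the paper's: both reduce weak blocking to the strict-blocking machinery of Theorem~\ref{thm:FOHGS_OnlyExtremes} and Proposition~\ref{prop:FEOHGS_PN_blocking}, replacing ``every member strictly benefits'' with ``every member weakly benefits'' plus a second check that someone strictly benefits. The difference is that the paper's proof stops at that sentence, whereas you prove that the second check is actually sound, and that is the only step where anything could go wrong. Concretely, for necessary weak blocking the criterion ``all $\epsilon_i \geq 0$ and some $\epsilon_j > 0$'' could a priori be too strong: necessity only requires that under each resolution \emph{some} member (possibly a different one per resolution) is strict, which is exactly the covering subtlety the paper itself flags --- at the coalition level rather than the agent level --- in the discussion of Proposition~\ref{prop:FEOHGS_PCoreVerif_coNP}. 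Your argument that the per-agent pessimistic demands never conflict on a shared stranger edge (so when every $\epsilon_i = 0$ there is a single resolution making all members simultaneously indifferent, defeating weak blocking under that resolution) closes precisely this gap, and the $|N|+1$-gap computation showing the indifferent resolution is forced makes it airtight; the analogous conflict-freeness of the optimistic demands justifies the possible-weak-blocking criterion via $\delta_i$. So you take the same route as the paper but supply the justification its two-sentence proof omits; nothing in your proposal is incorrect.
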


\begin{proof}
To check whether $C$ weakly possibly or necessarily blocks $\gamma$, we can extend the logic used in Theorem \ref{thm:FOHGS_OnlyExtremes} and Proposition \ref{prop:FEOHGS_PN_blocking}.
Instead of ensuring that \textit{all} agents are strictly better off in a candidate coalition $C$, we check whether all $i\in C$ \textit{weakly} benefit from deviating, then use a second check to make sure that at least \textit{one} agent strictly benefits from the deviation.
Thus, we can determine whether $C$ possibly or necessarily weakly blocks a partition $\gamma$ in P-time.
\qed
\end{proof}

Since possibly and necessarily (weakly) blocking coalitions can be verified in P-time, we can better understand the hardness of N-CS and N-SCS verification for FOHGS and EOHGS.
Proposition \ref{prop:FEOHGS_NCoreVerif_coNPC} builds on the above results in addition to results from Chen et al. and Sung and Dimitrov \cite{chenhedonic23,SUNG2007155}.



\begin{proposition}
\label{prop:FEOHGS_NCoreVerif_coNPC}
Verification of N-CS and N-SCS for FOHGS EOHGS are coNP-complete.
\end{proposition}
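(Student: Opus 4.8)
The plan is to establish both directions—membership in coNP and coNP-hardness—for all four problems (N-CS and N-SCS over FOHGS and EOHGS), treating them uniformly since the arguments are parallel.

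For \textbf{membership}, I would first observe that the universal quantifier over resolutions in the definition of N-CS can be swapped with the existential quantifier hidden in ``not core stable.'' Concretely, a partition $\gamma$ \emph{fails} to be N-CS exactly when there is some resolution $r$ and some coalition $C$ that blocks $\gamma$ under $r$; since both quantifiers are existential, this is equivalent to the existence of a single coalition $C$ that \emph{possibly} blocks $\gamma$. By Theorem~\ref{thm:FOHGS_OnlyExtremes}, whether a given $C$ possibly blocks $\gamma$ is decidable in P-time, so a possibly-blocking coalition is a polynomial-size certificate checkable in polynomial time. Hence the complement of N-CS verification lies in NP, placing N-CS verification in coNP. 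The N-SCS case is identical, replacing ``blocks'' by ``weakly blocks'' and invoking Proposition~\ref{prop:FEOHGS_PSCS_incoNP} for the P-time check of a candidate possibly weakly blocking coalition.

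For \textbf{hardness}, I would exploit the fact that FOHG (resp. EOHG) is precisely the restriction of FOHGS (resp. EOHGS) to instances with $S=\emptyset$. When $S=\emptyset$ there is a unique resolution, so N-CS collapses to ordinary CS and N-SCS collapses to ordinary SCS. The reduction therefore maps an instance $(G',\gamma)$ of CS (resp. SCS) verification for FOHG/EOHG to the FOHGS/EOHGS instance with the same friendship and enemy sets, $S=\emptyset$, and the same partition $\gamma$; this map is computable in P-time and preserves the answer. Since CS and SCS verification are coNP-complete for FOHG and EOHG \cite{chenhedonic23,SUNG2007155}, each of the four problems is coNP-hard. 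Combining membership and hardness yields coNP-completeness.

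The main obstacle is conceptually modest: the crux is justifying the quantifier reordering underlying membership, namely that failure of \emph{necessary} stability is witnessed by a \emph{single} possibly-(weakly-)blocking coalition rather than requiring a joint witness of a resolution together with a blocking coalition. Once that equivalence is in place, both Theorem~\ref{thm:FOHGS_OnlyExtremes} and Proposition~\ref{prop:FEOHGS_PSCS_incoNP} do the heavy lifting, and hardness is an essentially free consequence of the known FOHG/EOHG results via the $S=\emptyset$ embedding.
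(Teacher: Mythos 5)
Your proposal is correct and follows essentially the same route as the paper: coNP membership via the polynomial-time checks for possibly (weakly) blocking coalitions (Theorem~\ref{thm:FOHGS_OnlyExtremes} and Proposition~\ref{prop:FEOHGS_PSCS_incoNP}), and coNP-hardness by observing that FOHG/EOHG are the $S=\emptyset$ special cases, where N-CS and N-SCS collapse to CS and SCS, whose verification is coNP-complete by Sung--Dimitrov and Chen et al. Your explicit justification of the quantifier swap (that failure of necessary stability is witnessed by a single possibly-blocking coalition) is a point the paper leaves implicit, but it is the same argument.
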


\begin{proof}
It follows from Theorem \ref{thm:FOHGS_OnlyExtremes}, Proposition \ref{prop:FEOHGS_PN_blocking}, and Proposition \ref{prop:FEOHGS_PSCS_incoNP} that the CS and SCS verification problems for FOHGS and EOHGS are contained in coNP.

Sung and Dimitrov proved that CS and SCS verification are coNP-complete for EOHG \cite{SUNG2007155}.
More recently, Chen et al. proved that CS and SCS verification are coNP-complete for FOGH and EOHG \cite{chenhedonic23}.
Since CS and SCS verification is coNP-complete for FOHG and EOGH, and FOHG (resp. EOHG) special cases of FOHGS (resp. EOHGS), it follows that N-CS and N-SCS verification are coNP-hard for both FOHGS and EOHGS.
Thus N-CS and N-SCS verification are coNP-hard and contained in coNP, meaning they are coNP-complete. \qed
\end{proof}

Our next result shows that P-SCS, P-CS, and 
P-INS verification are in co-NP.
Note that this result cannot be directly implied 
by Proposition~\ref{prop:FEOHGS_PN_blocking}.
\begin{proposition}
\label{prop:FEOHGS_PCoreVerif_coNP}
P-SCS, P-CS, and P-INS verification are in coNP for FOHGS and EOHGS.
\end{proposition}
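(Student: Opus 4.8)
The plan is to reduce each possible-stability verification problem to an ordinary (fixed-resolution) stability verification problem by way of a single, polynomial-time-computable ``stabilizing'' resolution. Concretely, given a partition $\gamma$, define the resolution $r^*$ that turns every stranger relation $(i,j)$ into friendship when $\gamma(i)=\gamma(j)$ and into enmity when $\gamma(i)\neq\gamma(j)$. I would first argue that $\gamma$ is P-X if and only if $\gamma$ is X stable under $r^*$ alone, for X $\in\{\mathrm{CS},\mathrm{SCS},\mathrm{INS}\}$. The ``if'' direction is immediate, since $r^*$ is itself a resolution witnessing possible stability. The substance is the ``only if'' direction, established in contrapositive form: if some coalition (weakly) blocks $\gamma$ under $r^*$, then that same coalition (weakly) blocks $\gamma$ under \emph{every} resolution, so $\gamma$ fails to be X stable for all resolutions and is therefore not P-X.

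For the core notions, the key step is the additive decomposition, for any candidate coalition $C$ and member $i\in C$, of the comparison $u_i^r(C)$ versus $u_i^r(\gamma(i))$ into $u_i^r(C\setminus\gamma(i))$ versus $u_i^r(\gamma(i)\setminus C)$ (the shared part $C\cap\gamma(i)$ cancels). Every stranger relation from $i$ into $C\setminus\gamma(i)$ crosses coalitions of $\gamma$ and is therefore set to enmity by $r^*$, which by Lemma~\ref{lem:FOHGS_MinMax} minimizes $u_i^r(C\setminus\gamma(i))$; every stranger relation from $i$ into $\gamma(i)\setminus C$ stays within $\gamma(i)$ and is set to friendship by $r^*$, which maximizes $u_i^r(\gamma(i)\setminus C)$. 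Crucially, these two relation sets are disjoint, and each relation's classification under $r^*$ depends only on $\gamma$, so $r^*$ simultaneously realizes the least-favorable-to-stability value for every member of every candidate $C$. Hence a (weak) block under $r^*$ forces the same (weak) block under all resolutions. For INS the deviating set $D$ lies inside a single coalition $C\in\gamma$, so only within-coalition relations are relevant; $r^*$ makes them friendships, maximizing $u_i^r(C\setminus D)$, and the same monotonicity argument applies.

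Given this equivalence, coNP membership follows: to certify that $\gamma$ is \emph{not} P-X, guess a single coalition and verify in polynomial time that it (weakly) blocks, or internally blocks, $\gamma$ under the fixed resolution $r^*$ --- equivalently, that it is a necessary (weak) blocker, which is checkable in P-time by Proposition~\ref{prop:FEOHGS_PN_blocking} and Proposition~\ref{prop:FEOHGS_PSCS_incoNP}. Since $r^*$ is computable in polynomial time, this places the complement of each problem in NP. The argument is uniform across FOHGS and EOHGS, because Lemma~\ref{lem:FOHGS_MinMax} fixes the same minimizing and maximizing resolutions in both settings, with only the magnitudes of the friend and enemy values swapped.

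The main obstacle is establishing that the single, $C$-independent resolution $r^*$ is simultaneously optimal against every candidate deviating coalition. This is precisely where the result goes beyond Proposition~\ref{prop:FEOHGS_PN_blocking}: that proposition only verifies a \emph{fixed} necessary block, whereas here one must show that the non-existence of \emph{any} stabilizing resolution is equivalent to the existence of a necessary block under $r^*$ --- a claim that requires the disjoint-relation-set decomposition above, together with careful attention to the internal-deviation case $C\subseteq\gamma(i)$ and, for SCS, to the weak-versus-strict preference bookkeeping.
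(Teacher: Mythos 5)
Your proposal is correct, and its skeleton coincides with the paper's own proof: you define exactly the paper's resolution $r'$ (your $r^*$, friendship inside $\gamma$-coalitions, enmity across), prove that $\gamma$ is P-X if and only if it is X stable under that single polynomial-time-computable resolution, explicitly rule out the $\Sigma_2^P$ scenario in which a family of coalitions covers all resolutions with no single necessary blocker, and extract the coNP certificate as one coalition that blocks under the fixed resolution. The one place you take a genuinely different route is the preservation step, i.e.\ showing that any (weak) blocker under $r^*$ blocks under \emph{every} resolution. The paper proves this via an iterative one-edge-flip lemma (Lemma~\ref{lem:rp_edge_flips}): starting from $r'$, turning a single within-coalition friend edge into enmity or a single cross-coalition enemy edge into friendship never destroys a (weak) block, and every resolution is reachable by such flips; the $j\in C$ versus $j\notin C$ case analysis there also settles the weak/strict bookkeeping for SCS. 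You instead argue globally and in one shot: cancel the shared part $C\cap\gamma(i)$, note that the residual stranger-edge sets (from $i$ into $C\setminus\gamma(i)$ and into $\gamma(i)\setminus C$) are disjoint, and observe that $r^*$ simultaneously minimizes $u_i(C\setminus\gamma(i))$ and maximizes $u_i(\gamma(i)\setminus C)$ for every member of every candidate coalition by the monotonicity of Lemma~\ref{lem:FOHGS_MinMax}, so a (weak) block under $r^*$ persists under all resolutions. Your version is shorter and reuses the decomposition already deployed in the proof of Theorem~\ref{thm:FOHGS_OnlyExtremes}, and it makes explicit the crucial point that $r^*$ is worst-case for all candidate deviators at once; the paper's flip lemma is more elementary and localizes the weak-blocking case analysis edge by edge. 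Your handling of P-INS (the comparison collapses to $0 > u_i(C\setminus D)$, which $r^*$ maximizes) is also sound, and is in fact more explicit than the paper's terse reduction of P-INS to the P-CS machinery.
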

One may assume that P-SCS, P-CS, and P-INS can be verified by showing that no necessarily (weakly) blocking coalition exists, but this  assumption is flawed.
Even if no necessarily (weakly) blocking coalition 
exists, there might exist a set of coalitions
$\mathbb{C}=\{C_1, ..., C_k\}$, 
s.t. for each resolution of $S$, 
$\exists C\in \mathbb{C}$ where 
$\gamma$ is blocked by $C$, i.e., 
it might be the case that no single coalition
blocks for all possible resolutions, but 
multiple coalitions cover all possible resolutions.
If the size of $\mathbb{C}$ becomes exponential, 
then P-CS verification can be 
a $\Sigma_2^P$ problem.
The same logic holds for P-SCS and P-INS verification.
The full proof, found in Appendix A, conclusively proves that a necessary (weakly) blocking coalition must exist if $\gamma$ is not P-SCS, P-CS, or P-INS.

Next, we provide possible and necessary verification results for individual-based stability notions.



\begin{proposition}
\label{prop:FEOHGS_PNNashS_P}
P-NS and N-NS for FOHGS and EOHGS can be verified in P-time.
\end{proposition}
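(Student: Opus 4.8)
The plan is to exploit the defining feature of Nash stability: deviations are unilateral and require no permission, so only the deviating agent's own utility matters, and the strangers of $i$ lying in $i$'s current coalition $\gamma(i)$ are disjoint from those lying in any target coalition $C\in\gamma$. This disjointness lets me optimize the two terms $u_i^r(C\cup\{i\})$ and $u_i^r(\gamma(i))$ independently, and Lemma~\ref{lem:FOHGS_MinMax} tells me exactly which resolution extremizes each.

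For N-NS I would argue that $\gamma$ fails to be N-NS if and only if some single deviation can be made profitable. For each agent $i$ and each $C\in\gamma\cup\{\emptyset\}$ with $i\notin C$, I form the resolution that turns $i$'s strangers inside $C$ into friends and $i$'s strangers inside $\gamma(i)$ into enemies; by Lemma~\ref{lem:FOHGS_MinMax} this simultaneously maximizes $u_i^r(C\cup\{i\})$ and minimizes $u_i^r(\gamma(i))$, and the two extremes are realizable by one resolution because the relevant stranger sets are disjoint. If $u_i^{r^+}(C\cup\{i\}) > u_i^{r^-}(\gamma(i))$ holds for some pair, a resolution witnessing a violation exists, so $\gamma$ is not N-NS; if it fails for every pair, then no deviation is profitable under any resolution and $\gamma$ is N-NS. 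There are $O(|N|^2)$ pairs and each comparison is a P-time utility computation, so N-NS verification is in P.

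For P-NS I would introduce the single \emph{greedy} resolution $r^g$ that resolves a stranger edge $(i,j)$ to friendship exactly when $i$ and $j$ share a coalition of $\gamma$ (i.e. $j\in\gamma(i)$), and to enmity otherwise. Under $r^g$ every agent's own-coalition utility $u_i^{r^g}(\gamma(i))$ is simultaneously maximized (all its in-coalition strangers became friends) while its utility $u_i^{r^g}(C\cup\{i\})$ in every other coalition is simultaneously minimized (all those strangers became enemies). Hence $r^g$ minimizes every agent's incentive to deviate at once, so if $\gamma$ is NS under any resolution it is NS under $r^g$, and conversely. Thus P-NS reduces to checking NS under the single resolution $r^g$, which is P-time. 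I would note that $r^g$ depends only on the coalition structure and therefore treats $(i,j)$ and $(j,i)$ identically, so $r^g$ is symmetric and the argument covers symmetric as well as asymmetric instances; likewise each N-NS witness can be mirrored into a symmetric resolution without affecting $i$'s incentive. The EOHGS cases follow by swapping the magnitudes of the positive and negative values, exactly as in the proof of Theorem~\ref{thm:FOHGS_OnlyExtremes}.

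The main obstacle to anticipate is justifying that, unlike P-CS (which Proposition~\ref{prop:FEOHGS_PCoreVerif_coNP} shows can require covering arguments over exponentially many coalitions), P-NS admits one universally stabilizing resolution. The crux is that Nash deviations involve only a single agent, so there is never tension between keeping agent $i$ stable and keeping agent $j$ stable; the greedy rule optimizes each agent's situation independently and, being a function of the fixed coalition structure, is globally consistent and well defined. Verifying that $r^g$ indeed attains the per-agent maximum and minimum promised by Lemma~\ref{lem:FOHGS_MinMax} for every $(i,C)$ pair is the one place where care is needed, but it is immediate from the disjointness of the stranger sets.
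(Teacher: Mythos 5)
Your proposal is correct and takes essentially the same approach as the paper: your greedy resolution $r^g$ is exactly the paper's resolution $r'$ (in-coalition strangers become friends, cross-coalition strangers become enmity) used to decide P-NS, and your per-pair N-NS test comparing $u_i^{r^+}(C\cup\{i\})$ against $u_i^{r^-}(\gamma(i))$ is equivalent to the paper's single mirrored resolution $r^*$, with both arguments resting on Lemma~\ref{lem:FOHGS_MinMax} and the disjointness of the relevant stranger sets. The only cosmetic differences are that the paper packages your per-agent N-NS witnesses into one global resolution rather than one per $(i,C)$ pair, and that you explicitly include the deviation to the empty coalition, which the paper's write-up glosses over.
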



\begin{proof}
We define the resolution $r'$ s.t.,
given instance $G=(N,F,E,S)$ and partition $\gamma$, 
each edge $(i,j)\in S$ becomes friendship if $\gamma(i)=\gamma(j)$, and otherwise becomes enmity.
We observe that under $r'$, $\forall i\in N$, $u_i^{r'}(\gamma(i))=u_i^{r^+}(\gamma(i))$ and $\forall C\in \gamma: C\neq \gamma(i)$, 
$u_i^{r'}(C\cup\{i\})=u_i^{r^-}(C\cup\{i\}$.
It follows that if $\nexists i\in N$ 
and $C\in \gamma: C\neq \gamma(i)$ s.t. $u_i^{r'}(C\cup\{i\}) > u_i^{r'}(\gamma(i))$, then $\gamma$ is P-NS.

Next, we define the resolution $r^*$ s.t.,
given instance $G=(N,F,E,S)$ and partition $\gamma$
each edge $(i,j)\in S$ becomes enmity if $\gamma(i)=\gamma(j)$, and otherwise becomes friendship.
Next, we observe that under $r^*$, $\forall i\in N$, $u_i^{r^*}(\gamma(i))=u_i^{r^-}(\gamma(i))$ and $\forall C\in \gamma: C\neq \gamma(i)$, 
$u_i^{r^*}(C\cup\{i\})=u_i^{r^+}(C\cup\{i\})$.
It reasonably follows that if $\nexists i\in N$ and 
$C\in \gamma: C\neq \gamma(i)$ s.t. $u_i^{r^*}(C\cup\{i\}]) > u_i^{r^*}(\gamma(i))$, then $\gamma$ is N-NS.
\qed
\end{proof}




Our results P-IS, N-IS, P-CIS, N-CIS, P-IR, and N-IR verification follow similar logic to our N-NS and P-NS verification results.

\begin{theorem}
\label{thm:FEOHGS_PNCISIS_P}
P-IS, N-IS, P-CIS, N-CIS, P-IR, and N-IR for FOHGS and EOHGS can be verified in P-time.
\end{theorem}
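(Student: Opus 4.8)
The plan is to reduce each of the six verification problems to ordinary stability verification under a single, carefully chosen resolution, in the same spirit as the NS argument of Proposition~\ref{prop:FEOHGS_PNNashS_P}. For the \emph{possible} notions I would reuse the resolution $r'$ (every within-coalition stranger becomes a friend, every cross-coalition stranger becomes an enemy), and for the \emph{necessary} notions the reverse resolution $r^*$. The point is that $r'$ makes $\gamma$ look as stable as possible: by Lemma~\ref{lem:FOHGS_MinMax} and the observations recorded in the proof of Proposition~\ref{prop:FEOHGS_PNNashS_P}, under $r'$ every agent has $u_i^{r'}(\gamma(i))=u_i^{r^+}(\gamma(i))$ (maximal home utility) and $u_i^{r'}(C\cup\{i\})=u_i^{r^-}(C\cup\{i\})$ for every $C\neq\gamma(i)$ (minimal utility from deviating), while $r^*$ does the opposite. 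Since $r'$ and $r^*$ are fixed resolutions, once applied the instance is an ordinary FOHG/EOHG-style game, and IS/CIS/IR can then be verified by a direct P-time scan over the polynomially many agent--coalition pairs.

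IR is the easiest case and I would dispatch it first: individual rationality depends only on $u_i(\gamma(i))$, so resolving every within-coalition stranger as a friend simultaneously maximizes all home utilities (in the symmetric case friendship helps both endpoints, so there is no conflict). Thus $\gamma$ is P-IR iff it is IR under $r^+$ and N-IR iff it is IR under $r^-$. For IS the new ingredient is the join-permission condition $\forall j\in C:\ v_j(i)\ge 0$. I would prove $\gamma$ is P-IS iff $\gamma$ is IS under $r'$. The backward direction is immediate; for the forward direction I would argue the contrapositive. If some agent $i$ has a valid IS-deviation to $C$ under $r'$, then since $r'$ turns every cross-coalition stranger into an enemy, the condition $v_j^{r'}(i)\ge 0$ forces every $j\in C$ to be a \emph{known} friend of $i$, so permission holds under every resolution; and because the preference condition already holds at the extremes, $u_i^{r^-}(C\cup\{i\})>u_i^{r^+}(\gamma(i))$ gives $u_i^{r}(C\cup\{i\})\ge u_i^{r^-}(C\cup\{i\})>u_i^{r^+}(\gamma(i))\ge u_i^{r}(\gamma(i))$ for all $r$, so the deviation survives every resolution and $\gamma$ is not P-IS. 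The N-IS claim ($\gamma$ is N-IS iff IS under $r^*$) follows the mirror argument: a deviation under any $r$ has each $j\in C$ a non-enemy of $i$, so under $r^*$ the cross-coalition strangers become friends and grant permission, reproducing the deviation.

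CIS adds the leave-permission condition $\forall k\in\gamma(i)\setminus\{i\}:\ v_k(i)\le 0$. The key observation is that $r'$ and $r^*$ treat this new constraint in the same favourable direction as the others, and, crucially, that the edge orientations never clash even in the symmetric case: for a would-be deviation of $i$ from $\gamma(i)$ to $C$, the preference condition wants the edges of $i$ inside $\gamma(i)$ resolved as friendships and those toward $C$ as enmities, which is exactly what grants the leave-permission denial (within-coalition strangers become friends, so $v_k^{r'}(i)>0$) and the join-permission denial (cross-coalition strangers become enemies). Thus under $r'$ a surviving CIS-deviation forces every $j\in C$ to be a known friend and every remaining $k$ to be a known enemy of $i$, both resolution-independent, so the deviation persists under all $r$; this yields P-CIS iff CIS under $r'$, and the symmetric argument yields N-CIS iff CIS under $r^*$. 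The EOHGS cases are identical after swapping the magnitudes of the friend and enemy values, since Lemma~\ref{lem:FOHGS_MinMax} and, more importantly, the \emph{sign} of each $v_j(i)$ (positive for friends, negative for enemies) are unchanged, so the permission conditions read the same way.

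The main obstacle I expect is the interaction between the preference condition and the two permission conditions: a single global resolution must be favourable for all of them at once, for every agent simultaneously. The heart of the argument is that $r'$ (resp.\ $r^*$) is genuinely \emph{extremal} rather than merely good on average, and the delicate part is the forward direction, where one must show that any deviation surviving under $r'$/$r^*$ forces the permission-relevant edges to be \emph{known} friendships or enmities, so that the deviation cannot be destroyed by any other resolution. Secondary care is needed for the $C=\emptyset$ case, which collapses to an IR-style check ($0>u_i^{r^+}(\gamma(i))$), and for confirming in the symmetric setting that the orientation of each shared edge demanded by the preference condition coincides with the orientation demanded by the relevant permission condition, so that no edge is pulled in two directions at once.
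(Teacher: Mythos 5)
Your proposal is correct and takes essentially the same route as the paper's proof: both rely on Lemma~\ref{lem:FOHGS_MinMax} and the extreme resolutions $r'$/$r^*$ of Proposition~\ref{prop:FEOHGS_PNNashS_P}, and your derived conditions (desire to move checked at the $r^+$/$r^-$ extremes, plus entry/exit permission forced onto \emph{known} friend/enemy edges) coincide exactly with the paper's ``necessary/possible desire, permission to enter, permission to leave'' conjunctions. The only difference is packaging --- the paper checks the conjunction of conditions directly, while you phrase it as ordinary IS/CIS/IR verification under the single fixed resolution $r'$ or $r^*$ --- which is an equivalent formulation.
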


\begin{proof}
We rely on Lemma \ref{lem:FOHGS_MinMax} to simplify the number of resolutions of $S$ we must consider for some EOHGS or FOHGS $G$.
If some partition $\gamma$ is P-CIS, then there must exist no $i\in N$ and $C\in \gamma: C\neq \gamma(i)$ which satisfy the following conditions: $u_i^{r^-}(C\cup\{i\}) > u_i^{r^+}(\gamma(i))$ (necessary desire to move), $\forall j\in C$ $i\in F_j$ (necessary permission to enter), and $\forall k\in \gamma(i) \setminus\{i\}$ $i\in E_k$ (necessary permission to leave).
Verifying P-IS only requires that there exists no $i\in N$ and $C\in \gamma: C\neq \gamma(i)$ which satisfy the first two of these three conditions.
Since $|\{C\subseteq N: C\in \gamma\}| \leq |N|$, we can conclude that P-CIS and P-IS can be verified in P-time.

If some partition $\gamma$ is N-CIS, then there must exist no $i\in N$ and $C\in \gamma: C\neq \gamma(i)$ which satisfy the following conditions: $u_i^{r^+}(C\cup\{i\}) > u_i^{r^-}(\gamma(i))$ (possible desire to move), $\forall j\in C$ $i\in F_j$ or $i\in S_j$ (possible permission to enter), and $\forall k\in \gamma(i) \setminus\{i\}$ $i\in E_k \cup S_k$ (possible permission to leave).
Verifying N-IS only requires that there exists no $i\in N$ and $C\in \gamma: C\neq \gamma(i)$ which satisfy the first two of these three conditions.
Since $|\{C\subseteq N: C\in \gamma\}| \leq |N|$, we can conclude that N-CIS and N-IS can be verified in P-time.

If some partition $\gamma$ is P-IR, then there must exist no $i\in N$ such that $u_i^{r^+}(\gamma(i)) < 0$.
If some partition $\gamma$ is N-IR, then there must exist no $i\in N$ such that $u_i^{r^-}(\gamma(i)) < 0$.
Thus P-IR and N-IR can be verified in P-time.
\qed
\end{proof}


\section{Possible and Necessary Stability Existence}
We first observe that every FOHGS and EOHGS is identical to some FOHG or EOHG respectively once all strangers have been resolved into either friends or enemies.
This leads to the following straightforward observation, which begets several trivial results, which we omit for brevity.

\begin{observation}
\label{obs:FEOHG_PEXIST}
If X-stable partitions are guaranteed to exist for FOHG (resp. EOHG), then possibly X-stable partitions must exist for FOHGS (resp. EOHGS).
\end{observation}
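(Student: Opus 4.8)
The plan is to reduce directly to the assumed FOHG (resp.\ EOHG) existence guarantee by fixing a single resolution of the strangers. Concretely, I would take an arbitrary FOHGS instance $G=(N,F,E,S)$ and choose any resolution $r$ of its strangers; the extreme resolution $r^+$ from Lemma~\ref{lem:FOHGS_MinMax} is a convenient choice, but any $r$ works. Under $r$ every stranger edge becomes a friendship or an enmity, so each agent now classifies every other agent as either friend or enemy, and the resulting instance $G_r$ is an ordinary FOHG on the same agent set $N$.

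The key step I would then carry out is to verify that stability notion X transfers cleanly between $G_r$ viewed as a FOHG and $G$ viewed as a FOHGS under resolution $r$. This amounts to checking that the utility each agent derives from any coalition is identical in the two readings. By Definition~\ref{def:FEOHGS}, once $r$ is applied the stranger term $\sum_{j\in S_i\cap\gamma(i)}v_i(j)$ is exactly absorbed into the friend/enemy counts of a FOHG, so $u_i^r(C)=u_i(C)$ in $G_r$ for every agent $i$ and coalition $C$. Since X is defined purely in terms of these utilities, a partition is X-stable in the FOHG $G_r$ if and only if it is X-stable under resolution $r$ in $G$.

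With that equivalence in hand the conclusion is immediate. By hypothesis, X-stable partitions are guaranteed to exist for FOHG, so $G_r$ admits some X-stable partition $\gamma$. By the equivalence just established, $\gamma$ is X-stable under the resolution $r$ of strangers in $G$, which is exactly the definition of $\gamma$ being possibly X-stable. Hence a possibly X-stable partition exists for $G$. The EOHGS case is identical, substituting the EOHG existence guarantee and the EOHG utility values.

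The proof has essentially no hard step: the only part requiring any care is the utility-matching argument of the second paragraph, and even that is routine once the definitions are unwound. The one conceptual point---and the reason the statement is merely an observation---is that possible stability demands only a single favorable resolution, so fixing \emph{any} resolution collapses the existence question into the already-resolved FOHG/EOHG setting where the guarantee is assumed to hold.
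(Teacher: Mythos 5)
Your proposal is correct and takes essentially the same route as the paper: the paper justifies this observation by noting that every FOHGS (resp.\ EOHGS) becomes an ordinary FOHG (resp.\ EOHG) once all strangers are resolved, which is precisely your fix-any-resolution argument followed by invoking the assumed existence guarantee. The paper treats the utility-matching step as immediate from Definition~\ref{def:FEOHGS} rather than spelling it out, but nothing in your write-up differs in substance.
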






Answering the question of whether necessarily stable partitions exist is more involved than the question of possible stability existence.
Note that all relevant graphs use black lines between nodes to denote known friends and dashed blue lines to denote strangers.





\begin{theorem}
\label{thm:FEOHGS_NoCIS}
N-CIS partitions are not guaranteed to exist for FOHGS and EOHGS even when relations are symmetric.
\end{theorem}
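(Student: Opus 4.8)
The claim is that necessarily contractually individually stable (N-CIS) partitions are not guaranteed to exist for FOHGS and EOHGS, even with symmetric relations. Since this is a negative existence result, the natural proof is by counterexample: I construct a specific symmetric FOHGS instance (and, separately or by analogy, a symmetric EOHGS instance) for which *no* partition is N-CIS. Recall from the N-CIS verification characterization in Theorem 7's proof that a partition $\gamma$ fails to be N-CIS exactly when there exist an agent $i$ and a coalition $C \in \gamma \cup \{\emptyset\}$ with $C \neq \gamma(i)$ satisfying three "possible" conditions simultaneously: a possible desire to move ($u_i^{r^+}(C \cup \{i\}) > u_i^{r^-}(\gamma(i))$), possible permission to enter ($\forall j \in C$, $i \in F_j \cup S_j$), and possible permission to leave ($\forall k \in \gamma(i) \setminus \{i\}$, $i \in E_k \cup S_k$).

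**The construction.**

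The plan is to exploit the permissiveness of these "possible" conditions. Strangers grant *both* possible permission to enter (since $i \in S_j$ suffices) and possible permission to leave (since $i \in S_k$ suffices), while also giving the deviating agent a possible utility boost under $r^+$. I would look for a small symmetric graph — a handful of agents connected by stranger edges, perhaps arranged in a short cycle or path — where for *every* candidate partition, some agent can optimistically improve by moving to another coalition whose members are all strangers/friends to it, while its current coalition-mates are all strangers/enemies to it (so they can possibly consent to its departure). The key design goal is to make the "possible desire to move" condition trivially satisfiable everywhere via the $|N|$-weighted stranger friendship bonus in FOHGS, so that no arrangement of the agents escapes a possible deviation. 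I would draw this with black edges for known friends and dashed blue edges for strangers, as the excerpt's graph convention indicates.

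**Exhausting the cases.**

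After fixing the instance, I would enumerate (up to symmetry) the candidate partition structures — singletons, the grand coalition, and the intermediate groupings — and for each exhibit one agent $i$ and one target $C$ meeting all three N-CIS-violating conditions. The argument should reduce to a compact case analysis leveraging the graph's symmetry to collapse many partitions into a few representatives. For the EOHGS direction I would give an analogous construction, swapping the roles of the magnitudes as the paper does throughout, or argue that the same skeleton works with the enemy-oriented utilities.

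**Expected main obstacle.**

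The hard part will be ensuring the counterexample is genuinely exhaustive: I must guarantee that there is *no* partition that is N-CIS, not merely that some natural candidates fail. The subtlety is that the three conditions are interlocking — granting an agent a possible reason and possible permission to move in one partition may be blocked in another — so the instance must be engineered so that the "escape valve" (some agent always has a possible, permitted, beneficial deviation) never closes across all partitions simultaneously. Getting a small enough symmetric graph that still forces this universal instability, and verifying the full case enumeration cleanly, is where the real work lies.
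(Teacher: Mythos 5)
Your proposal correctly identifies the right proof strategy---exhibit a symmetric instance in which every partition admits a possible, permitted, beneficial deviation---but it never actually produces one: no instance is fixed, no partitions are enumerated, and no deviation is verified. Everything after ``The construction'' is conditional (``I would look for\ldots'', ``After fixing the instance, I would enumerate\ldots''), and you yourself flag the exhaustive verification as the part ``where the real work lies.'' That verification \emph{is} the content of the theorem, so as written this is a plan for a proof rather than a proof, and there is nothing concrete to check.

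You also miss how small the witness can be, which is why the gap is easy to close. The paper's counterexample is just two agents $1$ and $2$ who are mutual strangers, with exactly two partitions to check: $\gamma_1 = \{\{1\},\{2\}\}$ and $\gamma_2 = \{\{1,2\}\}$. If the stranger edge resolves to friendship, $\gamma_1$ is not CIS: agent $1$ strictly gains by joining $\{2\}$, agent $2$ consents, and the abandoned coalition is empty so permission to leave is vacuous. If the edge resolves to enmity, $\gamma_2$ is not CIS: each agent strictly gains by leaving for the empty coalition, and its coalition-mate (who now holds an enemy) consents to the departure. Hence neither partition is CIS under every resolution, so no N-CIS partition exists, and the same computation works for FOHGS and EOHGS alike. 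Your instinct to search for ``a handful of agents \ldots{} in a short cycle or path'' and your worry about interlocking conditions across many partitions make the case analysis far larger than necessary; indeed, the paper already observes that when all agents are mutual strangers no partition is N-CIS, and the two-agent instance is simply the minimal member of that family.
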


\begin{proof}
Consider a FOHGS or EOHGS instance with two agents $1$ and $2$ who are mutual strangers with each other.
Since there are only two agents, there are two possible partitions for this game, which we denote as: $\gamma_1=\{\{1\}, \{2\}\}$ and $\gamma_2=\{\{1,2\}\}$; both partitions give both agents a utility of 0 before the stranger edge is resolved.
If the stranger edge resolves to friendship, then $\gamma_1$ would be CIS and $\gamma_2$ would not be. If the stranger edge resolves to enmity, then $\gamma_2$ would be CIS and $\gamma_1$ would not be. Thus, regardless of how the stranger edge is resolved, neither partition is N-CIS
Thus N-CIS partitions are not guaranteed to exist for symmetric FOHGS and EOHGS.
\qed
\end{proof}

\begin{theorem}
\label{thm:EOHGS_NoIndivSym}
N-IS partitions are not guaranteed to exist for EOHGS, even when relations are symmetric and all agents have known friends.
\end{theorem}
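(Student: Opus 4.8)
The plan is to exhibit a single symmetric EOHGS instance in which every partition fails individual stability under at least one resolution of its stranger edges, so that no N-IS partition exists. I would use the three-agent ``path'' instance $G=(N,F,E,S)$ with $N=\{1,2,3\}$, friendship edges $F=\{(1,2),(2,3)\}$, a single stranger edge $S=\{(1,3)\}$, and $E=\emptyset$. This is symmetric by construction, and every agent has a known friend (agent $2$ is a mutual friend of both $1$ and $3$), so both hypotheses hold. Since $|N|=3$, a friend (or a stranger resolved to friendship) is worth $+1$ and an enemy is worth $-3$, and there are exactly five partitions to rule out. I would organize them into two cases according to whether $1$ and $3$ share a coalition, naming for each a resolution of the lone stranger edge $(1,3)$ together with a witnessing individual deviation.

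Case A (agents $1$ and $3$ together, covering $\{\{1,3\},\{2\}\}$ and $\{\{1,2,3\}\}$): here I resolve $(1,3)$ to enmity. Then agent $1$ has an enemy in its coalition, giving $u_1\le 1-3<0$, so $1$ strictly prefers the singleton $\{1\}$ of utility $0$; the deviation to $C=\emptyset$ requires no permission, so the partition is not IS under this resolution.

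Case B (agents $1$ and $3$ separated, covering the all-singletons partition and $\{\{1,2\},\{3\}\}$ and $\{\{1\},\{2,3\}\}$): here I resolve $(1,3)$ to friendship. For the all-singletons partition, agent $1$ wants to join $\{2\}$ (utility $1>0$) and friend $2$ grants permission, so it is not IS (indeed for every resolution). For $\{\{1,2\},\{3\}\}$, agent $3$ wants to join $\{1,2\}$: under the friendship resolution $u_3(\{1,2,3\})=2>0$, while $2$ permits as a friend and $1$ permits since its utility rises from $1$ to $2$; hence it is not IS. The partition $\{\{1\},\{2,3\}\}$ is handled symmetrically by swapping the roles of $1$ and $3$.

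Since each of the five partitions admits a legitimate individual deviation under some resolution, none is stable for all resolutions, and therefore no N-IS partition exists, proving the claim. The main difficulty is bookkeeping rather than conceptual: I must check that the IS permission conditions (each receiving member weakly benefits) actually hold under the chosen resolution for the ``join'' deviations of Case B, and that leaving to the empty coalition is admissible in Case A, which it is because $C=\emptyset\in\gamma\cup\{\emptyset\}$ imposes no permission requirement. A secondary point worth stating explicitly is that N-IS only demands a single bad resolution per partition, so using different resolutions for different partitions creates no conflict.
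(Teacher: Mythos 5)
Your proof is correct and takes essentially the same approach as the paper: the paper uses the isomorphic three-agent instance (friend edges $(1,2),(1,3)$ and stranger edge $(2,3)$) and likewise rules out all five partitions by exhibiting, for each, a resolution and a witnessing individual deviation. Your grouping into two cases by whether the stranger pair shares a coalition is a slightly tidier bookkeeping of the same exhaustive check.
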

\begin{proof}
Consider a symmetric EOHGS with three agents:, $\{1, 2, 3\}$, symmetric friend edges $(1,2), (1,3)$, stranger edge $(2,3)$, and no enemy edges.
With three agents we can exhaustively analyze each possible partition of the agents.
These are: 
(i) $\{\{1\}, \{2\}, \{3\}\}$,
(ii) $\{\{1,2\}, \{3\}\}$, 
(iii) $\{\{1,3\}, \{2\}\}$, 
(iv) $\{\{1\}, \{2,3\}\}$, and 
(v) $\{\{1,2,3\}\}$.

All agents have a utility of 0 in partition (i). It is not N-IS, since both 1 and 2 (or 1 and 3) 
desire to move to (ii) (or to (iii)) regardless of the outcome of $(2,3)$.
Agents 1 and 2 have a utility of 1, while agent 3 has a utility of 0 in partition (ii).
If $(2,3)$ becomes friendship, 
all agents desire to move to (v). Thus, partition (ii) is not N-IS.
Agents 1 and 3 have a utility of 1, while agent 2 has a utility of 0 in partition (iii). 
If $(2,3)$ becomes friendship, 
all agents desire to move to (v). Thus, partition (iii) is not N-IS.
Agent 1 has a utility of 0 and agents 2 and 3 have a utility of $-3$ or $1$ depending on how $(2,3)$ is resolved in partition (vi).
If $(2,3)$ becomes friendship, 
all agents desire to move to (v). Thus, partition (vi) is not N-IS.
Agent 1 has a utility of 2 and agents 2 and 3 both have utilities of either 2 or -1 in partition (v) depending on how $(2,3)$ is resolved.
If $(2,3)$ becomes enmity, then agent 2 (or agent 3) desires to move to 
(iii) (or to (ii)). 
Thus, partition (v) is not N-IS.
Since none of the possible partitions of $G$ are N-IS, we conclude that N-IS partitions are not guaranteed to exist for EOHGS, even when relations are symmetric and all agents have known friends.
\qed
\end{proof}





Corollary \ref{cor:FEOHGS_No_NIS_NNS_NSCS} follows from Theorem \ref{thm:FEOHGS_NoCIS}.

\begin{corollary}
\label{cor:FEOHGS_No_NIS_NNS_NSCS}
N-IS, N-NS, and N-SCS partitions are not guaranteed to exist even when relations are symmetric.
\end{corollary}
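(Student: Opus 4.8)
The plan is to derive the corollary entirely from the implication hierarchy among the stability notions, so that the single symmetric counterexample constructed for Theorem~\ref{thm:FEOHGS_NoCIS} (two mutual strangers, where neither partition is N-CIS) simultaneously witnesses the non-existence of N-IS, N-NS, and N-SCS partitions. In other words, I would show that each of N-IS, N-NS, and N-SCS implies N-CIS, and then apply the contrapositive.

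First I would establish the standard chain of implications for a \emph{fixed} resolution, where every FOHGS or EOHGS instance is an ordinary hedonic game. Comparing Definitions~\ref{def:Nash_stab}, \ref{def:indiv_stab}, and \ref{def:contract_indiv_stab}, the condition defining a CIS-blocking deviation is obtained from the IS condition by appending the permission-to-leave clause, and the IS condition is obtained from the NS condition by appending the permission-to-enter clause. Hence every CIS-blocking deviation is also an IS-blocking deviation, and every IS-blocking deviation is also an NS-blocking deviation, which gives $\text{NS}\Rightarrow\text{IS}\Rightarrow\text{CIS}$ as properties of a partition.

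Next I would show $\text{SCS}\Rightarrow\text{CIS}$. Given a CIS-blocking deviation in which agent $i$ leaves $\gamma(i)$ to join $C\in\gamma\cup\{\emptyset\}$, I would exhibit $C\cup\{i\}$ as a coalition that weakly blocks $\gamma$ in the sense of Definition~\ref{def:strict_core_stab}: agent $i$ strictly improves by the desire-to-move clause $C\cup\{i\}\succ_i\gamma(i)$, and every $j\in C$ weakly improves by the permission-to-enter clause, since $\gamma(j)=C$ for $j\in C\in\gamma$; the $C=\emptyset$ case yields the singleton $\{i\}$, which weakly (indeed strictly) blocks. Thus any CIS violation produces a weak block, so SCS implies CIS. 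I would flag this as the only slightly delicate step—the permission-to-leave clause is not actually needed, and the empty-coalition case must be handled explicitly—whereas the individual-based implications are immediate from the nested definitions.

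Finally I would lift these to the necessary variants. Because N-X means X holds under every resolution of strangers, and each implication above holds resolution-by-resolution, it follows that $\text{N-NS}\Rightarrow\text{N-CIS}$, $\text{N-IS}\Rightarrow\text{N-CIS}$, and $\text{N-SCS}\Rightarrow\text{N-CIS}$. Taking contrapositives, the symmetric instance of Theorem~\ref{thm:FEOHGS_NoCIS}, which admits no N-CIS partition, admits no N-IS, N-NS, or N-SCS partition either, which proves the corollary. The main obstacle is purely getting the direction of each implication correct; once $\text{NS}\Rightarrow\text{IS}\Rightarrow\text{CIS}$ and $\text{SCS}\Rightarrow\text{CIS}$ are in hand the result is immediate. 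I would also note that this argument deliberately says nothing about N-CS, consistent with the corollary omitting it, precisely because a CIS violation yields only a \emph{weak} block and need not produce a strictly blocking coalition.
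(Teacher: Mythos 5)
Your proposal is correct and is essentially the paper's own argument: the paper derives Corollary~\ref{cor:FEOHGS_No_NIS_NNS_NSCS} directly from Theorem~\ref{thm:FEOHGS_NoCIS} via the implication hierarchy ($\text{NS}\Rightarrow\text{IS}\Rightarrow\text{CIS}$ and $\text{SCS}\Rightarrow\text{IS}\Rightarrow\text{CIS}$), lifted resolution-by-resolution to the necessary variants, exactly as you spell out. Your explicit treatment of the $C=\emptyset$ case and the observation that the permission-to-leave clause is not needed for the weak block are details the paper leaves implicit, but they do not change the route.
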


We introduce a restriction that guarantees the existence of N-CIS partitions.



\begin{proposition}
\label{prop:FEOHGS_CISFriendGuarantee}
If all agents are viewed as a known friend by at least one agent, then N-CIS partitions must exist for FOHGS and EOHGS.
\end{proposition}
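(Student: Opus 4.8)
The plan is to exhibit a single, universal witness partition --- the grand coalition $\gamma = \{N\}$ --- and argue it is N-CIS under the stated hypothesis. The key observation is that a CIS deviation requires three conditions to hold simultaneously (Definition~\ref{def:contract_indiv_stab}): the deviating agent must strictly prefer the target coalition, the target coalition must permit the agent's entry, and the remaining members of the agent's current coalition must permit the agent's departure. I would show that in the grand coalition the \emph{permission-to-leave} condition can never be met, which rules out every deviation outright, independent of desire or entry permission. This is why the grand coalition is the natural choice: every agent sits in the same coalition as \emph{everyone}, so in particular it shares a coalition with any agent who regards it as a friend.

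Concretely, I would invoke the N-CIS characterization established in the proof of Theorem~\ref{thm:FEOHGS_PNCISIS_P}: for $\gamma$ to fail to be N-CIS, some agent $i$ must have \emph{possible permission to leave} $\gamma(i)$, i.e.\ every $k \in \gamma(i)\setminus\{i\}$ must satisfy $i \in E_k \cup S_k$ (so that under some resolution $k$ weakly prefers $\gamma(i)\setminus\{i\}$ to $\gamma(i)$). In the grand coalition $\gamma(i) = N$ for every $i$, hence $\gamma(i)\setminus\{i\} = N\setminus\{i\}$. By hypothesis each agent $i$ is viewed as a known friend by some agent $i'$, so $i \in F_{i'}$ and therefore $i \notin E_{i'}\cup S_{i'}$; since $i' \in N\setminus\{i\}$, this witnesses the failure of possible permission to leave for $i$. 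As this holds for every $i \in N$, no agent can execute even a possibly blocking CIS deviation, so $\{N\}$ is N-CIS and an N-CIS partition exists.

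I would stress that the argument is identical for FOHGS and EOHGS, since the permission conditions depend only on the friend/enemy/stranger graph, not on the specific utility magnitudes: a known friend $i'$ of $i$ strictly benefits from $i$'s presence under either utility scheme, so $i'$ never consents to $i$ leaving. The case $|N| = 1$ is trivial. I expect the only delicate point to be confirming that the friend-holder $i'$ of $i$ genuinely lies in $N\setminus\{i\}$ and thus blocks departure in the grand coalition; once that is pinned down, no case analysis over resolutions or over candidate target coalitions is required, which is precisely what makes the grand coalition a clean witness. As a consistency check, the two mutual-stranger instance of Theorem~\ref{thm:FEOHGS_NoCIS} violates the hypothesis --- neither agent is a known friend of anyone --- matching the failure of N-CIS there.
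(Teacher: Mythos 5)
Your proposal is correct and takes essentially the same approach as the paper: both use the grand coalition as the witness partition and argue that, since every agent $i$ satisfies $i \in F_k$ for some $k$, that known friend $k$ is strictly worse off if $i$ departs and so never grants permission to leave, under any resolution of strangers. Your version is slightly more explicit in routing the argument through the N-CIS verification conditions of Theorem~\ref{thm:FEOHGS_PNCISIS_P}, but the substance is identical.
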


\begin{proof}
Consider a FOHGS or EOHGS where each agent is viewed as a friend by at least one agent and let the grand coalition form.
Now suppose that there exists some agent $i\in N$ who is motivated to become a singleton for some resolution of strangers; in the FOHGS case, this means that $F_i = \emptyset$, while in the EOHGS case, it means that $E_i\neq \emptyset$.
Agent $i$ does not require permission to join their destination coalition, because they want to leave to become a singleton.
$\exists k\in N$ such that $i\in F_k$, so $i$ cannot get permission to leave the coalition regardless of how stranger relations are resolved.
Thus, the grand coalition is N-CIS.
\qed
\end{proof}

While ensuring that each agent is viewed as a friend by at least one other agent is sufficient to guarantee the existence of N-CIS partitions, it is not a strictly necessary requirement.

\begin{proposition}
\label{prop:FEOHGS_FriendSufNotNecCIS}
Ensuring that each agent is viewed as a friend by at least one agent is not a necessary condition for N-CIS in FOHGS or EOHGS.
\end{proposition}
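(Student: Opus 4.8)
The plan is to refute necessity by exhibiting a single (symmetric) instance in which some agent is the friend of no one, yet an N-CIS partition still exists. I would take $N=\{1,2,3\}$ with a single friend edge $(1,2)$, a single enemy edge $(1,3)$, and a single stranger edge $(2,3)$, all symmetric. Here agent $3$ is viewed as an enemy by $1$ and as a stranger by $2$, so $3\notin F_1\cup F_2$, meaning the hypothesis of Proposition~\ref{prop:FEOHGS_CISFriendGuarantee} fails. The candidate witness is $\gamma=\{\{1,2\},\{3\}\}$, and the entire argument reduces to verifying that $\gamma$ is N-CIS.

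To verify N-CIS I would invoke the characterization established in the proof of Theorem~\ref{thm:FEOHGS_PNCISIS_P}: $\gamma$ is N-CIS iff there is no agent $i$ and coalition $C\in\gamma$ with $C\neq\gamma(i)$ simultaneously satisfying possible desire to move ($u_i^{r^+}(C\cup\{i\})>u_i^{r^-}(\gamma(i))$), possible permission to enter ($i\in F_j\cup S_j$ for all $j\in C$), and possible permission to leave ($i\in E_k\cup S_k$ for all $k\in\gamma(i)\setminus\{i\}$). Since $1$ and $2$ form a mutual-friend coalition, neither can obtain permission to leave (each is the other's friend), so the only agent I must scrutinize is $3$. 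Its sole non-trivial target is $\{1,2\}$, and here the enemy edge $(1,3)$ does the decisive work: because $3\in E_1$, we have $3\notin F_1\cup S_1$, so agent $1$ withholds entry permission under \emph{every} resolution. Thus no admissible deviation exists and $\gamma$ is N-CIS, completing the FOHGS construction.

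The same instance also handles EOHGS, and I would note that the check is if anything easier there: under the enemy valuation, agent $3$'s best-case utility from joining $\{1,2\}$ is already negative, so $3$ lacks even the desire to move and stability of $\gamma$ follows a fortiori. I anticipate the main obstacle to be conceptual rather than computational. The naive attempt---isolating the friendless agent using only stranger edges---fails, because the N-CIS test evaluates deviations under the most permissive resolution, in which a stranger may become a friend and therefore grant entry permission; the friendless agent would then be able to join $\{1,2\}$. The key idea is that a genuine enemy edge, unlike a stranger edge, blocks entry permission under all resolutions, which is exactly what is needed to trap agent $3$ in its singleton while leaving the friend coalition intact.
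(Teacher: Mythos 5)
Your proof is correct and takes essentially the same approach as the paper: the paper also refutes necessity with a small symmetric counterexample (five agents, friend edges $(1,2),(4,5)$, stranger edges $(2,3),(3,4)$, witness $\{\{1,2\},\{3\},\{4,5\}\}$) in which the friendless agent sits in a singleton, known enemy edges deny it entry permission under every resolution, and mutual friends deny each other permission to leave. Your three-agent instance is simply a more minimal realization of the identical idea, with the verification routed explicitly through the N-CIS characterization of Theorem~\ref{thm:FEOHGS_PNCISIS_P} rather than argued directly.
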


\begin{proof}
A trivial case with no strangers is a FOHG or EOHG instance for which a CIS (and thus N-CIS) partition must exist. The claim also holds for other cases.

Consider a symmetric FOHGS or EOHGS $G$ with five agents $\{1,2,3,4,5\}$ with undirected known friend edges $(1,2),(4,5)$ and stranger edges $(2,3),(3,4)$.
The partition $\{\{1,2\},\{3\},\{4,5\}\}$ is N-CIS, because only 3 wants to deviate, but 1 denies 3 permission to join $\{1,2\}$ if 2 and 3 become friends, and 5 denies permission to join $\{4,5\}$ if 3 and 4 become friends.
Further, in the EOHGS case no agent can benefit by deviating.
Thus, no matter how the stranger edges are resolved no agent has motivation and permission to deviate, so the partition $\{\{1,2\},\{3\},\{4,5\}\}$ is N-CIS.
Thus N-CIS partitions can exist for FOHGS and EOHGS where there exists an agent who is not viewed as a friend by anyone.
\end{proof}





Since NS implies CIS, Corollary \ref{cor:EOHGS_noNash} follows from Theorem \ref{thm:EOHGS_NoIndivSym}.


\begin{corollary}
\label{cor:EOHGS_noNash}
N-NS partitions are not guaranteed to exist for EOHGS instances, even when relations are symmetric and all agents have known friends.
\end{corollary}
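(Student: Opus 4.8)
The plan is to derive this corollary from Theorem~\ref{thm:EOHGS_NoIndivSym} by exploiting the fact that Nash stability is a strengthening of individual stability. Recall that the only difference between the NS and IS deviation conditions (Definitions~\ref{def:Nash_stab} and~\ref{def:indiv_stab}) is that IS additionally requires the deviating agent to obtain permission from the members of the destination coalition. Consequently, every IS-style deviation is also an NS-style deviation, so at any \emph{fixed} resolution $r$ (which, once strangers are resolved, is just an EOHG instance) any Nash stable partition is individually stable; that is, NS implies IS.

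First I would lift this implication to the necessary-stability setting. Suppose, toward a contradiction, that some partition $\gamma$ of an EOHGS instance $G$ is N-NS. Then $\gamma$ is NS under every resolution $r$ of the strangers in $G$. Applying the pointwise implication NS $\Rightarrow$ IS at each such $r$ shows that $\gamma$ is IS under every resolution, i.e.\ $\gamma$ is N-IS. Taking the contrapositive, if $G$ admits no N-IS partition, then $G$ admits no N-NS partition.

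Next I would instantiate this with the concrete witness from Theorem~\ref{thm:EOHGS_NoIndivSym}: the symmetric EOHGS on agents $\{1,2,3\}$ with friend edges $(1,2),(1,3)$, the single stranger edge $(2,3)$, and no enemies. That theorem establishes that this instance has no N-IS partition, so by the contrapositive above it has no N-NS partition either. Crucially, this instance is symmetric and every agent has at least one known friend, so both side conditions in the corollary are inherited verbatim, yielding the claim for EOHGS under exactly the stated restrictions.

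The argument is almost entirely structural, so there is no heavy computation; the one point that requires care is the direction of the implication and the choice of parent result. It is tempting to reason from Theorem~\ref{thm:FEOHGS_NoCIS} via a chain through CIS, but that theorem's witnessing instance consists of two mutual strangers with no known friends and therefore cannot certify the stronger ``all agents have known friends'' hypothesis. The main (mild) obstacle is thus selecting Theorem~\ref{thm:EOHGS_NoIndivSym} rather than Theorem~\ref{thm:FEOHGS_NoCIS} as the source, and confirming that its witnessing instance satisfies the corollary's side conditions, rather than re-deriving the nonexistence from scratch.
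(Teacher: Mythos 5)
Your proposal is correct and follows essentially the same route as the paper, which likewise derives the corollary from Theorem~\ref{thm:EOHGS_NoIndivSym} via the implication between Nash stability and individual stability (the paper's phrasing ``NS implies CIS'' is looser than your ``NS implies IS,'' but the intended chain is the same). Your explicit lifting of the pointwise implication to necessary stability and your check that the three-agent witness satisfies both side conditions merely spell out what the paper leaves implicit.
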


We now provide a positive result for symmetric FOHGS.

\begin{theorem}
\label{thm:Symm_FOHGS_NNS_PT}
N-NS existence can be checked for symmetric FOHGS in P-time.
\end{theorem}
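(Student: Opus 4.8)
The plan is to reduce the existence question to a single, easily checkable combinatorial condition on the friendship and stranger sets, and then invoke the verification machinery of Proposition~\ref{prop:FEOHGS_PNNashS_P}. Recall from that proposition that a partition $\gamma$ is N-NS if and only if it is Nash stable under the resolution $r^*$ that turns every stranger edge inside a coalition into enmity and every stranger edge between coalitions into friendship; this $r^*$ is simultaneously the worst case for all agents (minimizing home utility, maximizing the utility of joining elsewhere), so all of the analysis below works with $r^*$ alone rather than with the $2^{|S|}$ resolutions.

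First I would prove the necessary direction. Claim: if some agent $i$ has a stranger but no known friend ($S_i\neq\emptyset$, $F_i=\emptyset$), then no partition is N-NS. The argument is a dichotomy on where $i$ sits. If $\gamma(i)\neq\{i\}$, then under $r^*$ every co-member of $i$ contributes $-1$ (an enemy, or a same-coalition stranger turned enemy) while $i$ gains nothing from a friend, so $u_i^{r^*}(\gamma(i))<0$ and $i$ strictly prefers to become a singleton. If instead $\gamma(i)=\{i\}$, pick any stranger $j\in S_i$; since $j$ lies in a different coalition $C'=\gamma(j)$, the edge $(i,j)$ becomes friendship under $r^*$, so $u_i^{r^*}(C'\cup\{i\})\ge |N|-|E_i\cap C'|>0=u_i^{r^*}(\{i\})$ and $i$ strictly prefers to join $C'$. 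Either way $\gamma$ is not Nash stable under $r^*$, hence not N-NS.

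Next I would prove sufficiency by exhibiting a witness. Let $N_+=\{i:F_i\neq\emptyset\}$ and $N_0=N\setminus N_+$, and define $\gamma^\dagger$ to consist of the single coalition $N_+$ together with a singleton $\{k\}$ for each $k\in N_0$. Assume the condition holds, i.e. every agent with a stranger has a friend; by symmetry this forces both endpoints of every stranger edge into $N_+$, so no stranger edge crosses between $N_+$ and any singleton. I would then verify Nash stability of $\gamma^\dagger$ under its own $r^*$: (a) each $i\in N_+$ keeps all of its friends at home, since its friends lie in $N_+$, so $u_i^{r^*}(N_+)\ge |N|-(|N|-1)=1>0$ and $i$ never wants to leave, while the only other coalitions are $N_0$-singletons, each an enemy of $i$, so joining one yields utility $-1$; (b) each $k\in N_0$ has only enemies, so it is content as a singleton and gains nothing by joining $N_+$ or another singleton. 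Thus $\gamma^\dagger$ is Nash stable under $r^*$, hence N-NS.

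Combining the two directions, a symmetric FOHGS admits an N-NS partition if and only if every agent that has a stranger also has a known friend, which is checkable in $O(|N|^2)$ time by scanning each agent's neighborhood; this establishes the theorem. The main obstacle is the necessity direction: one must rule out that some cleverly chosen partition could stabilize a friendless agent who has strangers, and the dichotomy argument above is what closes this gap. A secondary point requiring care is the use of symmetry to guarantee that all stranger edges stay within $N_+$, which is exactly what makes the witness $\gamma^\dagger$ immune to the friendship-creating effect of $r^*$ across coalition boundaries.
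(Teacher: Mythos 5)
Your proof is correct and follows essentially the same approach as the paper's: the paper's (sketched) proof likewise reduces N-NS existence for symmetric FOHGS to a necessary-and-sufficient condition on the friendship/stranger structure that can be checked in polynomial time, with the stability verification handled through the extremal resolution $r^*$ of Proposition~\ref{prop:FEOHGS_PNNashS_P}. Your version makes the condition (every agent with a stranger has a known friend) and the witness partition explicit, and both the necessity dichotomy and the sufficiency verification are sound, including the correct use of symmetry to keep all stranger edges inside $N_+$.
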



In the full proof for Theorem \ref{thm:Symm_FOHGS_NNS_PT}, we provide a method to check N-NS existence for a symmetric FOHGS instance in P-time.
Since N-NS verification is in P (Prop. \ref{prop:FEOHGS_PNNashS_P}), asymmetric FOHGS and EOHGS model asymmetric FOHG and EOHG for which and NS existence is NP-complete \cite{brandt2022single}, we make Observation \ref{obs:FEOHGS_NNS_NPC}.
\begin{observation}
    \label{obs:FEOHGS_NNS_NPC}
    N-NS existence is NP-complete for asymmetric FOHGS and EOHGS.
\end{observation}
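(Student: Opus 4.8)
The plan is to establish the two halves of NP-completeness separately, leaning on the inline hint that N-NS verification is already polynomial and that NS existence is hard for the strangerless special case.

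First, for membership in NP, I would use the standard guess-and-verify template: the certificate is a partition $\gamma$, whose description is polynomial in $|N|$. By Proposition~\ref{prop:FEOHGS_PNNashS_P}, deciding whether a given $\gamma$ is N-NS takes polynomial time, so nondeterministically guessing $\gamma$ and then verifying it places N-NS existence in NP. This step is purely mechanical.

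Second, for NP-hardness, I would observe that asymmetric FOHG (resp. EOHG) embeds into asymmetric FOHGS (resp. EOHGS) as the degenerate case $S=\emptyset$. Given an asymmetric FOHG instance $(N,F,E)$, the map sending it to the FOHGS instance $(N,F,E,\emptyset)$ is computable in linear time. The crucial point is that when $S=\emptyset$ there is exactly one resolution of strangers --- the empty one --- and under it the FOHGS utility function in Definition~\ref{def:FEOHGS} reduces identically to the FOHG utility function. Consequently a partition is X stable for \emph{all} resolutions (i.e.\ N-X) exactly when it is X stable in the underlying FOHG; specialising to $\mathrm{X}=\mathrm{NS}$, a partition is N-NS in the image instance iff it is NS in the source instance. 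Hence the image admits an N-NS partition iff the source admits an NS partition, so this is a valid many-one reduction. Since NS existence is NP-complete for asymmetric FOHG (resp. EOHG) via Brandt et al.~\cite{brandt2022single}, NP-hardness transfers to N-NS existence for asymmetric FOHGS (resp. EOHGS). Combining the two parts gives NP-completeness.

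The embedding itself is immediate once the unique empty resolution is identified, so the main obstacle is not the reduction but pinning down the external hardness premise. The subtlety is that FOHG assigns friends the instance-dependent magnitude $|N|$ (and EOHG assigns enemies $-|N|$), so one must confirm that the Brandt et al.\ hardness result for asymmetric ASHG with a single positive and single negative value genuinely covers these scaling weights --- or, failing a direct appeal, supply a reduction that emits FOHG/EOHG-shaped utilities. Once that premise is secured, the equivalence ``N-NS with no strangers $=$ NS'' closes the argument with no further work.
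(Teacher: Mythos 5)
Your proposal matches the paper's own justification: the paper derives this observation exactly from Proposition~\ref{prop:FEOHGS_PNNashS_P} (P-time N-NS verification gives NP membership) together with the fact that asymmetric FOHGS/EOHGS with $S=\emptyset$ degenerate to asymmetric FOHG/EOHG, whose NS existence is NP-complete by Brandt et al.~\cite{brandt2022single}. Your cautionary note about the scaling weights is already covered by the cited result, which applies to any asymmetric ASHG subclass with a single positive and single negative utility value, including FOHG and EOHG.
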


\cite{brandt2022single} proved that IS existence can be checked for FOHG and EOHG in P-time. 
We prove that adding strangers makes the problem NP-complete for asymmetric FOHGS and EOHGS.


\begin{theorem}
\label{thm:Asymm_FOHGS_NIS_NPC}
N-IS existence is NP-complete for asymmetric FOHGS and EOHGS.
\end{theorem}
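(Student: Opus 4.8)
The statement asserts NP-completeness, so I must establish both membership in NP and NP-hardness. Membership is immediate from Theorem~\ref{thm:FEOHGS_PNCISIS_P}: a nondeterministic machine guesses a partition $\gamma$ (which has polynomial size) and, since N-IS verification runs in polynomial time, confirms that $\gamma$ is N-IS. Hence N-IS existence lies in NP for both FOHGS and EOHGS, and the substance of the proof is the hardness direction.

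For hardness, the crucial observation is that the difficulty cannot come from the friend/enemy structure alone: Brandt et al.~\cite{brandt2022single} showed that plain IS existence is polynomial-time solvable for asymmetric FOHG and EOHG, so any reduction must exploit the strangers. The plan is to reduce from a canonical NP-complete problem such as $3$SAT, using stranger edges to encode the combinatorial choices and asymmetric known friend/enemy edges to build rigid gadgets. The design principle is dictated by the verification characterization in Theorem~\ref{thm:FEOHGS_PNCISIS_P}: $\gamma$ is N-IS exactly when no agent $i$ has both a \emph{possible} desire to move to some $C\neq\gamma(i)$ (i.e.\ $u_i^{r^+}(C\cup\{i\}) > u_i^{r^-}(\gamma(i))$) and \emph{possible} permission to enter it (every $j\in C$ satisfies $i\in F_j\cup S_j$). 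Accordingly I would build (i) a variable gadget for each variable admitting exactly two locally stable configurations, representing its two truth values; (ii) a clause gadget that can be stably absorbed into the surrounding partition if and only if at least one of its literals is set true; and (iii) directed stranger edges linking literal-occurrences to their variable gadgets so that, across every resolution, any inconsistent or unsatisfying assignment exposes an agent with simultaneous desire and permission to move. An N-IS partition would then exist if and only if the formula is satisfiable.

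The main obstacle is that IS differs from NS precisely through the permission requirement, and it is exactly this requirement that makes plain FOHG IS easy; the gadgets must therefore arrange for permission to be grantable (via $F_j\cup S_j$ membership) exactly along the deviations that should destabilize bad partitions, yet withheld elsewhere, all while the stranger uncertainty is quantified over \emph{every} resolution. Because asymmetry lets $i\in S_j$ hold independently of $j\in S_i$, permission-granting stranger edges can be placed without symmetrically contaminating the desire inequalities, but getting the weights and placements so that $u_i^{r^+}(C\cup\{i\}) > u_i^{r^-}(\gamma(i))$ fires on precisely the intended moves is delicate, since a single misplaced stranger can manufacture a spurious destabilizing deviation under some resolution. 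Finally, since FOHGS and EOHGS invert the relative magnitudes of friendship and enmity ($|N|$ versus $-1$, against $1$ versus $-|N|$), I would present the EOHGS construction separately (or via a dual gadget), checking in each case that the correspondence between satisfying assignments and N-IS partitions holds in both directions.
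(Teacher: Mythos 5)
Your NP-membership argument is correct and matches the paper's logic: a partition is a polynomial-size certificate, and Theorem~\ref{thm:FEOHGS_PNCISIS_P} gives polynomial-time N-IS verification, so existence is in NP. You also correctly identify the key structural point that the hardness must be injected through stranger edges, since \cite{brandt2022single} makes plain IS existence easy for asymmetric FOHG and EOHG.

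The hardness direction, however, is a design outline rather than a proof, and that is a genuine gap. You state that you \emph{would} build a variable gadget with two stable configurations, a clause gadget absorbed exactly when a literal is true, and stranger edges linking them, but you never exhibit a single gadget, never specify which edges are friend, enemy, or stranger, and never prove either direction of the claimed equivalence (satisfiable $\Rightarrow$ N-IS partition exists, and unsatisfiable $\Rightarrow$ every partition admits, under some resolution, an agent with both desire and permission to deviate). You yourself concede that ``a single misplaced stranger can manufacture a spurious destabilizing deviation under some resolution''---but resolving exactly that difficulty \emph{is} the proof; acknowledging it does not discharge it. The universal quantification over resolutions makes the ``only if'' direction particularly demanding: you must argue about all partitions and all resolutions simultaneously, which is where such reductions typically fail without a carefully verified construction. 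The paper takes a different and more economical route: it reduces from Exact Cover by 3 Sets, explicitly modifying the existing EC3 reductions of \cite{brandt2022single} for NS-existence hardness, so that the rigidity of the gadgets is inherited from a construction already known to work, and only the changes needed to handle the IS permission requirement and the stranger quantification must be argued. Your from-scratch 3SAT plan could in principle succeed, but as written it asserts the existence of gadgets with the required properties without constructing them, so the theorem is not established.
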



The proof for Theorem \ref{thm:Asymm_FOHGS_NIS_NPC} utilizes a reduction from exact cover by 3 sets (EC3) that builds on existing work by 
\cite{brandt2022single}, but defer the full proof to Appendix B.
We now examine N-CIS existence.

\begin{theorem}
\label{thm:Symm_FOHGS_NCIS_PT}
N-CIS existence can be checked for symmetric FOHGS and symmetric EOHGS in P-time.
\end{theorem}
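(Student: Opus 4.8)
The plan is to show that deciding N-CIS existence reduces to checking two simple conditions on the graph $(N,F,E,S)$, and that a single explicitly constructed partition witnesses existence whenever those conditions hold. Call an agent $i$ \emph{free} if $F_i=\emptyset$; by symmetry these are exactly the agents whom no one regards as a known friend. When there are no free agents, Proposition~\ref{prop:FEOHGS_CISFriendGuarantee} already yields an N-CIS partition (the grand coalition), so the whole difficulty lies in how to place free agents. I will first pin down two structural facts, then build a canonical partition, then prove it is N-CIS exactly when the two conditions hold and that its failure rules out every other partition.

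The two structural facts are as follows. (i) \textbf{Every free agent must be a singleton in any N-CIS partition.} If a free agent $i$ sits in a coalition with at least one other member, consider $r^-$: since $F_i=\emptyset$, every co-member is an enemy or a stranger of $i$, so $u_i^{r^-}(\gamma(i))<0=u_i^{r^+}(\{i\})$, giving $i$ a strict desire to leave; moreover every co-member $k$ has $i\in E_k\cup S_k$, so under $r^-$ all of them grant permission for $i$ to leave. Hence $i$ has a CIS deviation to a singleton under $r^-$, contradicting N-CIS. (ii) \textbf{A non-free agent kept together with a known friend can never deviate.} If $i$ shares its coalition with some $f\in F_i$, then $i\in F_f$, so $f$ strictly loses if $i$ departs and denies permission to leave under \emph{every} resolution; since CIS requires permission to leave the current coalition, $i$ has no CIS deviation at all. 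These facts suggest the canonical partition $\gamma^\star$ in which each free agent is a singleton and \emph{all} non-free agents are merged into one coalition $B$. Because every friend of a non-free agent is itself non-free, each non-free agent retains all of its friends inside $B$, so by (ii) no non-free agent ever deviates in $\gamma^\star$.

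It then remains only to control free-agent deviations in $\gamma^\star$. A free singleton $i$ can profitably and permissibly move to a coalition $C\neq\{i\}$ precisely when $\emptyset\neq C\subseteq S_i$: permission to enter forces $C\subseteq F_i\cup S_i=S_i$, and any nonempty $C\subseteq S_i$ then yields $u_i^{r^+}(C\cup\{i\})>0=u_i^{r^-}(\{i\})$. The coalitions of $\gamma^\star$ are the other free singletons and the block $B$, so such a deviation exists iff (1) two free agents are mutual strangers (some $\{j\}\subseteq S_i$), or (2) some free agent has every non-free agent as a stranger (i.e.\ $B\subseteq S_i$). Thus $\gamma^\star$ is N-CIS iff \textbf{(1)} no two free agents are strangers and \textbf{(2)} every free agent has at least one non-free enemy (vacuous when no non-free agent exists); both are checkable in P-time. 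Since $i$ is free it has no friends, ``$B\subseteq S_i$'' is equivalent to ``$i$ has no non-free enemy,'' which is why (2) takes this form.

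For necessity, suppose (1) or (2) fails; I claim no N-CIS partition exists. By fact (i) all free agents are singletons in any candidate. If (1) fails, the two mutual-stranger free singletons give the deviation above. If (2) fails, fix a free agent $i$ with $B\subseteq S_i$; every non-singleton coalition in the candidate consists solely of non-free agents (free agents being singletons), hence lies in $B\subseteq S_i$, so any non-free agent sits in some coalition $C\subseteq S_i$, again yielding a deviation for $i$. Thus existence is equivalent to (1)$\wedge$(2), decidable in P-time. The EOHGS case is identical: the $\pm|N|$ weighting changes the arithmetic of $u_i^{r^+}$ and $u_i^{r^-}$ but leaves every qualitative step---forced singletons, the friend-lock, and the ``$C\subseteq S_i$'' deviation criterion---unchanged. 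I expect the main obstacle to be exactly this necessity argument: justifying that merging all non-free agents is without loss of generality, i.e.\ that no cleverer placement of the non-free agents can rescue a failing instance. This hinges on combining the forced-singleton property of free agents with the observation that shrinking $B$ can only ever create \emph{more} coalitions contained in some $S_i$.
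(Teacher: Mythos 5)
Your proposal is correct and takes essentially the same approach as the paper: the paper's proof sketch characterizes N-CIS existence by exactly your conditions (1) and (2) --- every agent either has a known friend, or else all of its strangers have known friends and it has an enemy who has a known friend --- and observes these are checkable in P-time. Your canonical partition $\gamma^\star$ (friendless agents as singletons, all others in one block) and the forced-singleton/friend-lock arguments supply the sufficiency and necessity details that the paper defers to its full, externally hosted proof.
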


We provide a deterministic method to check N-CIS existence in P-time in the full proof for Theorem \ref{thm:Symm_FOHGS_NCIS_PT}; see Appendix B.
Next, we show that N-CIS is NP-complete for asymmetric FOHGS.

\begin{theorem}
\label{thm:Asymm_FOHGS_NCIS_NPC}
N-CIS existence is NP-complete for asymmetric FOHGS.
\end{theorem}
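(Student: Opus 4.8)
The plan is to establish both membership in NP and NP-hardness. Membership is immediate from Theorem~\ref{thm:FEOHGS_PNCISIS_P}: a partition $\gamma$ is a certificate of polynomial size, and N-CIS can be verified in polynomial time, so one guesses a partition and verifies it. For hardness I would reduce from Exact Cover by 3-Sets (EC3), mirroring and extending the reduction already used for Theorem~\ref{thm:Asymm_FOHGS_NIS_NPC}.

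Given an EC3 instance with universe $U=\{a_1,\dots,a_{3q}\}$ and a family $\mathcal{S}=\{S_1,\dots,S_m\}$ of $3$-element subsets of $U$, I would construct an asymmetric FOHGS $G=(N,F,E,S)$ containing one element-agent per $a_i$, a constant-size set-gadget per $S_j$, and a few auxiliary forcing agents. The key levers are the two features unique to this setting. First, the directed friend edges encode the permission structure of CIS: since a friend edge points only one way, I can make a remaining member $k$ refuse an agent's \emph{departure} (by placing the departing agent in $F_k$) while that same agent is free to \emph{enter} elsewhere, and conversely I can make departure always granted by using enemy or stranger edges from the remaining members. Second, the stranger edges instantiate the two-agent instability gadget of Theorem~\ref{thm:FEOHGS_NoCIS}: any agent left ``unmatched'' by the candidate partition is made a mutual stranger with a partner, so that no resolution leaves it CIS-stable. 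The construction is arranged so that a N-CIS partition exists if and only if the element-agents can all be absorbed by a consistent placement of set-gadgets, i.e. if and only if $\mathcal{S}$ admits an exact cover.

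I would then prove the two directions. \emph{Forward:} if an exact cover $\mathcal{S}'\subseteq\mathcal{S}$ exists, form the partition in which each chosen set-gadget absorbs its three element-agents and every forcing pair is resolved into the configuration leaving no mutual-stranger instability; using Lemma~\ref{lem:FOHGS_MinMax} I would check at the two extremal resolutions $r^-,r^+$ that for every agent and every target coalition at least one of the three CIS conditions (strict desire, permission to enter, permission to leave) fails, so $\gamma$ is N-CIS. \emph{Backward:} if no exact cover exists, then in any partition some element-agent is either uncovered or doubly-covered; I would argue this forces either a forcing gadget containing a free mutual-stranger pair that CIS-deviates under one resolution, or an element-agent that simultaneously has the desire and—via the asymmetric friend edges—both permissions to move under some resolution. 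Hence no partition is N-CIS.

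The main obstacle I anticipate is exactly the permission-to-leave requirement that separates CIS from IS: a move that breaks IS is harmless for CIS whenever the deviator's current coalition withholds release. The reduction therefore cannot simply copy the N-IS gadgets; I must orient the friend edges so that in the ``no cover'' case the offending agent is always released (its remaining coalition-mates value it as an enemy or stranger), while in the ``cover'' case every tempting move is blocked by a withheld permission—to enter or to leave—under \emph{every} resolution. Making this permission bookkeeping consistent across all $2^{|S|}$ resolutions simultaneously, reduced by Lemma~\ref{lem:FOHGS_MinMax} to the extremal cases, is the delicate technical core of the argument.
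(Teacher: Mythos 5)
Your proposal follows essentially the same route as the paper: NP membership via the polynomial-time N-CIS verification of Theorem~\ref{thm:FEOHGS_PNCISIS_P}, and NP-hardness via a reduction from Exact Cover by 3-Sets that adapts the Brandt--Bullinger--Tappe gadgets and builds on the N-IS reduction of Theorem~\ref{thm:Asymm_FOHGS_NIS_NPC}, with the extra work concentrated exactly where you place it --- orienting directed friend edges to control the permission-to-leave condition that distinguishes CIS from IS. The paper itself only publishes a proof sketch (the full construction is hosted externally), so your gadget details cannot be checked line-by-line against it, but your plan matches the paper's stated strategy and correctly identifies its technical core.
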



The proof for Theorem \ref{thm:Asymm_FOHGS_NCIS_NPC} also uses a reduction from EC3, but is substantially more involved than the proof for Theorem \ref{thm:Asymm_FOHGS_NIS_NPC}, see Appendix B.
Our next result shows that even fairly strict graph restrictions cannot guarantee N-CS.

\begin{figure}
    \centering
    \begin{tabular}{cc}
        \includegraphics[width=0.25\linewidth]{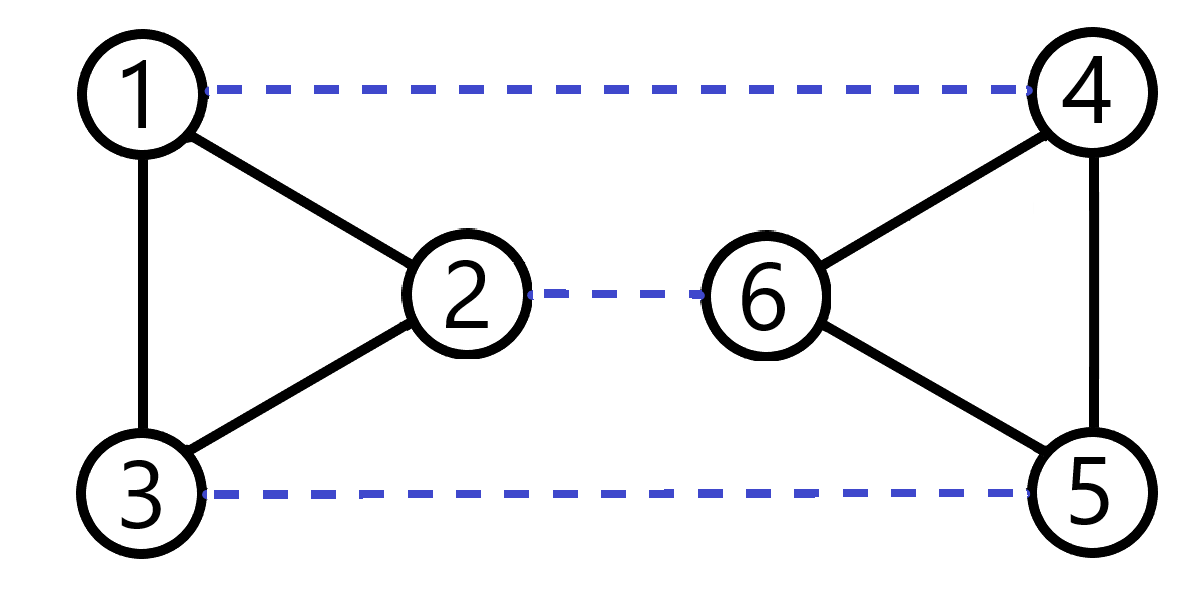} & \includegraphics[width=0.25\linewidth]{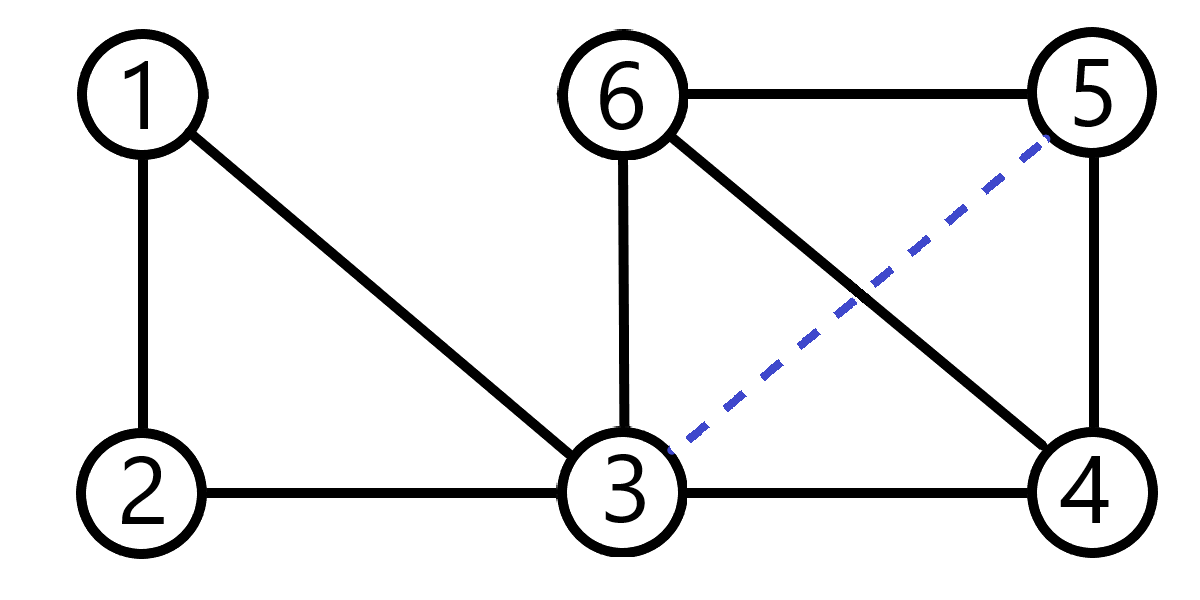} \\
        (a) & (b) \\
    \end{tabular}
    \caption{No N-CS partition (a) FOHGS, (b) EOHGS}
    \label{fig:NoNecessarySCS}
\end{figure}



\begin{theorem}
\label{thm:SymmFOHGS_NoCoreS}
N-CS partitions are not guaranteed to exist for FOHGS, even when relations are symmetric, each agent has more known friends than strangers, and each agent has at most one stranger.
\end{theorem}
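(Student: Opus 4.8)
The plan is to exhibit a single symmetric instance and argue that no partition can survive both \emph{extreme} resolutions. Concretely, I would take the instance drawn in Figure~\ref{fig:NoNecessarySCS}(a): two disjoint triangles $A=\{a_1,a_2,a_3\}$ and $B=\{b_1,b_2,b_3\}$ of known friends, joined by the three stranger edges $a_1b_1$, $a_2b_2$, $a_3b_3$ (a perfect matching across the triangles). Every agent then has exactly two known friends and exactly one stranger, so all hypotheses hold: relations are symmetric, each agent has strictly more friends than strangers ($2>1$), and each agent has at most one stranger.

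The key observation is that any N-CS partition must in particular be core stable under the two extreme resolutions $r^-$ (all strangers become enemies) and $r^+$ (all strangers become friends), since both are resolutions; hence it suffices to show that no single partition is core stable under \emph{both} $r^-$ and $r^+$. Under $r^-$ the game is just two disjoint friend-cliques, so I would first prove that its \emph{unique} core-stable partition is $\pi=\{A,B\}$: in $\pi$ every agent already enjoys the maximum of two friends and zero enemies, so $\pi$ is (strictly) core stable, while any other partition either splits a triangle---in which case reforming that triangle strictly improves all three of its members---or places an agent of one triangle together with a vertex of the other, in which case the pure triangle coalition sheds the extra enemy and blocks. Thus $\pi$ is the only partition left to rule out.

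Next I would show that $\pi$ fails under $r^+$, which finishes the argument: when all three stranger edges become friendships, the grand coalition $A\cup B$ blocks $\pi$, because each agent's matched partner becomes a friend, raising everyone's friend count from two to three, and in FOHGS a single extra friend (worth $|N|$) lexicographically dominates any number of new enemies (each worth $-1$); so every agent strictly prefers $A\cup B$ to its triangle. Since $\pi$ is the only candidate left by $r^-$ and it is blocked under $r^+$, no partition is core stable under all resolutions, i.e., no N-CS partition exists.

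The main obstacle---and the reason the construction looks the way it does---is that one cannot use a \emph{connected} known-friendship graph: by the strongly-connected-components result of Dimitrov et al.~\cite{dimitrov2006simple} the grand coalition of a connected friendship graph is core stable, and resolving strangers to friends only adds edges, which can never make that grand coalition blockable, so it would remain core stable under every resolution and be N-CS. This forces the known graph to be disconnected, and then the delicate point is to guarantee that resolving the strangers to friendship makes the cross-component merge strictly beneficial for \emph{every} agent simultaneously, not merely for the matched endpoints---which is exactly what a perfect matching between two cliques achieves. Establishing uniqueness of the core-stable partition under $r^-$, so that only one partition must be eliminated under $r^+$, is the other step requiring care.
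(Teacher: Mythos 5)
Your proposal is correct, and it is the same overall strategy as the paper's: exhibit one small symmetric instance meeting the restrictions and show it admits no N-CS partition. Your instance (two friend-triangles joined by a perfect matching of stranger edges, so each agent has two known friends, one stranger, and two known enemies) satisfies every hypothesis of the theorem and is almost certainly the configuration in Figure~\ref{fig:NoNecessarySCS}(a). Where you genuinely differ is in the verification. The paper's Appendix~B proof is a brute-force ``deterministic analysis of all possible partitions'' of the small instance; with six agents that means checking on the order of two hundred partitions against the relevant resolutions. You instead observe that an N-CS partition must in particular be CS under the two extreme resolutions $r^-$ and $r^+$, prove that under $r^-$ the partition $\{A,B\}$ into the two triangles is the \emph{unique} core-stable partition (any other partition is blocked by a reformed triangle, since a split triangle caps utility at $|N|$ while a cross-placed triangle can shed its enemies), and then kill that single surviving candidate under $r^+$ via the grand coalition, where every agent gains a third friend ($+|N|$) at the cost of only two enemies ($-2$). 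This uniqueness-then-elimination structure is cleaner, scales better than enumeration, and makes transparent \emph{why} the construction needs a perfect matching of strangers (a single stranger edge would leave $\{A,B\}$ N-CS, since non-endpoint agents can never strictly improve) and why the known friendship graph must be disconnected (otherwise the grand coalition would be N-CS by the strongly-connected-components result of Dimitrov et al.). Both of those design remarks are correct and are not articulated in the paper.
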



We show that a FOHGS based on Figure \ref{fig:NoNecessarySCS} (a) has no N-CS partition, see Appendix B.
Since SCS implies CS, Corollary \ref{cor:SymmFOHGS_NoNSCS} follows from Theorem \ref{thm:SymmFOHGS_NoCoreS}.

\begin{corollary}
\label{cor:SymmFOHGS_NoNSCS}
N-SCS partitions are not guaranteed to exist for symmetric FOHGS.
\end{corollary}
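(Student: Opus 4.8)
The plan is to exploit the logical containment between strict core stability and core stability and lift it to the necessary setting. First I would recall, from Definitions~\ref{def:core_stab} and~\ref{def:strict_core_stab}, that for any fixed resolution $r$ the instance reduces to an ordinary FOHG, and in that setting any blocking coalition is also a weakly blocking coalition: if $\forall i\in C$, $C\succ_i\gamma(i)$, then in particular $\forall i\in C$, $C\succeq_i\gamma(i)$ while some (indeed every) member strictly prefers $C$, which is precisely the weak-blocking condition. Hence a partition that is SCS under $r$ (no weak blocker) is also CS under $r$ (no blocker); that is, SCS implies CS resolution-by-resolution.

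Next I would lift this implication to the necessary setting. A partition $\gamma$ is N-SCS exactly when it is SCS under every resolution $r$; by the resolution-wise implication just noted, $\gamma$ is then CS under every resolution, i.e.\ $\gamma$ is N-CS. Thus N-SCS implies N-CS. Taking the contrapositive: if an instance admits no N-CS partition, then it admits no N-SCS partition either.

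Finally I would invoke Theorem~\ref{thm:SymmFOHGS_NoCoreS}, which exhibits a symmetric FOHGS instance (Figure~\ref{fig:NoNecessarySCS}(a)) admitting no N-CS partition. By the contrapositive above, this same instance admits no N-SCS partition, and since the instance is symmetric, the claimed non-existence follows.

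The argument is essentially a one-line logical consequence, so there is no substantive computational obstacle. The only point demanding care is the \emph{direction} of the lifted implication: one must confirm that it is N-SCS that implies N-CS and not the reverse. This hinges on the quantifier over resolutions being universal in both the definition of N-SCS and of N-CS, so that a resolution-wise implication ($\mathrm{SCS}\Rightarrow\mathrm{CS}$) transfers intact under the shared ``for all resolutions'' quantifier; I would state this explicitly to avoid any ambiguity.
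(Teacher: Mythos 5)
Your proposal is correct and matches the paper's own argument exactly: the paper derives Corollary~\ref{cor:SymmFOHGS_NoNSCS} from Theorem~\ref{thm:SymmFOHGS_NoCoreS} precisely via the observation that SCS implies CS (every blocking coalition is a weakly blocking coalition), lifted through the universal quantifier over resolutions. Your explicit care about the direction of the implication is a sound elaboration of the same one-line reasoning, not a different route.
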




Theorem \ref{thm:EOHGS_SymmCondNoGuarantee} shows that N-CS partitions may not exist for EOHGS under more restrictive conditions.

\begin{theorem}
\label{thm:EOHGS_SymmCondNoGuarantee}
N-CS partitions are not guaranteed to exist for symmetric EOHGS, agents have more known friends than strangers, at most one stranger, and only one pair of agents are strangers.
\end{theorem}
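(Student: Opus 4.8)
The plan is to exploit that a single stranger pair admits only two resolutions, so a partition is N-CS exactly when it is core stable (CS) under both the resolution $r^+$ that turns the stranger pair $(a,b)$ into a friendship and the resolution $r^-$ that turns it into an enmity. Writing $G^+$ and $G^-$ for the two induced (fully resolved) EOHG instances, a partition is N-CS iff it lies in $\mathrm{core}(G^+) \cap \mathrm{core}(G^-)$. It therefore suffices to construct a symmetric EOHGS meeting the three degree constraints whose two induced cores are disjoint.

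First I would record the structural fact that every core-stable partition of an EOHG is a partition into friendship-cliques. The key step is that if some coalition contains an enemy pair, the affected agent gets utility at most $(|N|-1) - |N| < 0$, so its singleton (utility $0$) blocks; hence every core-stable partition is individually rational, and by enemy-orientation every individually rational coalition is a clique. For clique partitions $u_i(\gamma(i)) = |\gamma(i)| - 1$, so a clique $C$ blocks $\gamma$ precisely when every $i \in C$ currently sits in a strictly smaller coalition.

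Next I would give the construction (cf. Figure~\ref{fig:NoNecessarySCS}(b)): take the complete friendship graph on $\{1,2,3,4\}$, declare the single pair $(3,4)$ to be strangers, and use no enemy edges. Then $3,4$ each have two friends and one stranger while $1,2$ have three friends and no stranger, so every agent has strictly more friends than strangers, has at most one stranger, and exactly one pair is strange. Under $r^+$ the instance is $K_4$: any partition other than the grand coalition leaves every agent in a coalition of size at most $3$, so the grand coalition (a $4$-clique giving each agent utility $3$) strictly blocks it, whence $\mathrm{core}(G^+) = \{\{N\}\}$. Under $r^-$ the pair $3,4$ are enemies, so the grand coalition is not individually rational and thus not core stable. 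The unique core-stable partition of $G^+$ therefore fails in $G^-$, the two cores are disjoint, and no N-CS partition exists.

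I expect the main obstacle to be the core-characterization step rather than the example itself: one must argue cleanly, via the singleton-deviation argument, that core stability forces clique partitions, so that the analysis of $G^+$ can be confined to clique partitions and the grand coalition shown to be its unique core. Once that reduction is in hand, disjointness is immediate, since I only need that $\mathrm{core}(G^+)$ is the single grand-coalition partition and that this partition is disqualified in $G^-$ by the enmity between $3$ and $4$; no enumeration of $\mathrm{core}(G^-)$ is required. The same instance, together with the implication that strict core stability entails core stability, will also yield the corresponding non-existence of N-SCS partitions.
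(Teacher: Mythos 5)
Your proposal is correct, and it proves the theorem with a valid counterexample satisfying all stated restrictions (symmetry, more known friends than strangers, at most one stranger per agent, a single stranger pair). However, it verifies non-existence differently from the paper. The paper's proof (sketched in Appendix B) constructs the instance of Figure~\ref{fig:NoNecessarySCS}(b) and then performs a deterministic, exhaustive analysis of every partition of the instance, checking for each one a resolution under which it is blocked. You instead argue structurally: core stability in an EOHG forces individual rationality (a coalition containing an enemy pair gives its victim negative utility, so the singleton blocks), hence clique partitions; under the friendship resolution $r^+$ your instance is the complete friendship graph $K_4$, whose unique core-stable partition is the grand coalition (any other partition leaves every agent in a coalition of size at most $3$, so the grand coalition blocks); and under the enmity resolution $r^-$ the grand coalition is not individually rational, so the two cores are disjoint and no partition can be core stable under all resolutions. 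This buys a shorter, enumeration-free argument that also explains \emph{why} the example works, and it scales to larger instances where brute force would be unwieldy; the paper's enumeration is more mechanical but requires no structural lemma. Two minor points to tidy up: your claim that a partition is N-CS \emph{iff} it lies in $\mathrm{core}(G^+)\cap\mathrm{core}(G^-)$ presumes that resolutions of a symmetric stranger pair are themselves symmetric, which the paper's definition of a resolution does not explicitly require; fortunately only the forward implication (N-CS implies CS under both $r^+$ and $r^-$) is needed, and that holds under any convention. Also, the utility bound for an agent with an enemy in its coalition is at most $(|C|-2)-|N|$ rather than $(|N|-1)-|N|$, though both are negative, so nothing breaks.
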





We show that an EOHGS based on Figure \ref{fig:NoNecessarySCS} (b) has no N-CS partition, see Appendix B.
While Theorems \ref{thm:SymmFOHGS_NoCoreS} and \ref{thm:EOHGS_SymmCondNoGuarantee} show similar negative results, the counterexamples used in their proofs are not interchangeable.
While N-CS cannot be guaranteed, even under relatively strong graph restrictions, we now present a positive result for N-INS existence.

\begin{proposition}
\label{prop:FOHGS_SCCNInternalStab}
A partition of strongly connected components on the friendship graph is always N-INS for FOHGS.
\end{proposition}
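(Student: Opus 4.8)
The proposition asserts that for any FOHGS instance, the partition $\gamma$ whose coalitions are the strongly connected components (SCCs) of the friendship graph is always N-INS, i.e. internally stable for every resolution of strangers. Recall that internal stability requires that no coalition $C \in \gamma$ admits a proper subset $D \subset C$ with $D \succ_i C$ for all $i \in D$. So the plan is to fix an arbitrary SCC coalition $C$ and an arbitrary nonempty proper subset $D \subset C$, and show that under \emph{every} resolution $r$ of the strangers, at least one agent $i \in D$ has $u_i^r(D) \le u_i^r(C)$, so that $D$ cannot make everyone strictly better off.

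**The core argument.** The plan is to exploit the structure of FOHG utilities together with the SCC property. In FOHG, an agent values each friend at $|N|$ and each enemy at $-1$; strangers resolve to one of these two values. The key observation is that since $C$ is a strongly connected component on the known friendship graph, there is no loss of any \emph{known friend} when an agent stays in $C$ rather than in $D$ — moving from $D$ up to the full component $C$ can only \emph{add} agents. Because friend edges dominate (each friend contributes $+|N|$, and there are at most $|N|-1$ agents each contributing at most $|N|-1$ in enemy/stranger penalty, which is strictly less than a single friend's value of $|N|$), I would argue that any agent in $D$ who has a known friend inside $C \setminus D$ strictly prefers $C$ to $D$, regardless of the resolution, since that friend's $+|N|$ swamps all possible negative contributions from the rest of $C \setminus D$. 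The hard part is handling agents in $D$ with no known friend in $C \setminus D$; here I would use strong connectivity: within the SCC $C$, connectivity forces the existence of at least one agent in $D$ whose outgoing friendship edges all stay inside $D$ but whose incoming structure still links $D$ to the rest — more carefully, strong connectivity of $C$ guarantees that $D \subsetneq C$ cannot be "closed off," so some agent of $D$ has a known friend in $C\setminus D$, and that single agent already fails the strict-improvement condition.

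**Key steps in order.** First I would recall the FOHG utility formula $u_i(A) = |N|\cdot|F_i \cap A| - |E_i\cap A| + \sum_{j\in S_i\cap A} v_i(j)$ and the bound that the total non-friend penalty on $i$ is at most $|N|-1$ in absolute value, strictly less than the value $|N|$ of one friend. Second, fix a resolution $r$, a coalition $C\in\gamma$, and $D\subsetneq C$. Third, invoke strong connectivity of $C$ to produce an agent $i^\ast \in D$ with a known friend $j \in C\setminus D$: since $C$ is strongly connected on $F$, the cut between $D$ and $C\setminus D$ is crossed by at least one known friendship edge into $D$ — and by symmetry of the friendship relation in the undirected/SCC sense, this yields an agent in $D$ with a friend outside $D$ but inside $C$. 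Fourth, for that $i^\ast$, compare $u_{i^\ast}^r(C)$ and $u_{i^\ast}^r(D)$: the extra friend $j$ contributes $+|N|$ to $C$ that is absent from $D$, and this gain strictly exceeds any additional penalties incurred from the remaining agents of $C\setminus D$, so $u_{i^\ast}^r(C) > u_{i^\ast}^r(D)$, i.e. $i^\ast$ does not strictly prefer $D$. Hence $D$ does not block, and since $D$ and $r$ were arbitrary, $\gamma$ is N-INS.

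**Main obstacle.** The delicate point is the third step: precisely formalizing why strong connectivity of $C$ forces a friendship edge from $C\setminus D$ into $D$. I expect this to hinge on the fact that an SCC has no proper subset that is closed under the friendship relation — if $D$ had no incoming friend edge from $C\setminus D$, then reachability from $C\setminus D$ into $D$ would be broken, contradicting strong connectivity of $C$. Making the directionality line up (which agent gains the friend) is where care is needed, but once an appropriate $i^\ast$ is exhibited, the utility comparison is the same dominance argument used throughout the FOHG literature and should go through cleanly for every resolution simultaneously, since strangers never remove a known friend.
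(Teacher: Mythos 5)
Your proposal is correct and takes essentially the same approach as the paper's own proof: both arguments rest on (i) the dominance observation that a single known friend in $C\setminus D$ contributes $+|N|$, which strictly exceeds the at most $|N|-1$ total penalty from all other members of $C\setminus D$ under any resolution, so such an agent never strictly prefers $D$; and (ii) the fact that strong connectivity of $C$ on the friendship graph forces at least one known friendship edge from $D$ into $C\setminus D$. The paper phrases this as a proof by contradiction whereas you argue directly over an arbitrary $D$ and resolution, and your explicit treatment of the directionality of the cut edge is slightly more careful than the paper's, but the substance is identical.
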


\begin{proof}
Suppose that a partition of strongly connected components on the friendship graph $\gamma$ for some symmetric FOHGS $G$ was not N-INS.
There must $\exists C\in \gamma$ and some $D\subset C$ such that $\forall i\in D$, $\max{u_i(D)} > \min{u_i(C)}$ and $\min{u_i(D)} > \min{u_i(C)}$.
In order for $\min{u_i(D)} > \min{u_i(C)}$ to hold $\forall i\in D$, it must be the case that $\forall i\in D$, $F_i \cap (C\setminus\{D\}) = \emptyset$ else there would exist some $i$ such that $\min{u_i(D)} < \min{u_i(C)}$.
Thus, the condition $\forall i\in D$, $\min{u_i(D)} > \min{u_i(C)}$ can only hold when $D$ is possibly disjoint from $C$ on the union of the friendship and stranger graph, which contradicts the basic premise that all $C\in \gamma$ are strongly connected components on the friendship graph.
Thus a partition of strongly connected components on the friendship graph is N-INS for FOHGS.
\qed
\end{proof}





\begin{proposition}
\label{prop:EOHGS_CPNInternalStab}
A partition of cliques on the friendship graph is always N-INS for EOHGS.
\end{proposition}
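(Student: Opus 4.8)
The plan is to exploit the defining feature of a clique partition: within every coalition $C$ of the partition each pair of agents is a known friend, so for each $i\in C$ we have $F_i\cap C = C\setminus\{i\}$ and $E_i\cap C = S_i\cap C = \emptyset$. The crucial consequence is that \emph{no stranger edge lies inside any coalition of the partition}, so the utility an agent draws from its own coalition, or from any subset of it, does not depend on how the strangers are resolved. This observation is what collapses the universal quantifier over all $2^{|S|}$ resolutions in the definition of N-INS down to a single deterministic calculation.

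First I would compute, for any $i\in C$, the EOHGS utility from Definition~\ref{def:FEOHGS}: $u_i(C) = |F_i\cap C| - |N|\cdot|E_i\cap C| + \sum_{j\in S_i\cap C} v_i(j) = |C|-1$, independent of the resolution. Any proper subset $D\subset C$ is again a clique (being a subset of a clique), so the same reasoning gives $u_i(D)=|D|-1$ for every $i\in D$, again independent of the resolution. Since $D\subset C$ is proper we have $|D|<|C|$, hence $u_i(D)=|D|-1 < |C|-1 = u_i(C)$ for every $i\in D$, under every resolution of strangers.

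Finally I would invoke the internal-stability definition (Definition~\ref{def:int_stab}): a coalition $C$ fails to be internally stable under some resolution $r$ only if there is a $D\subset C$ with $u_i^r(D) > u_i^r(C)$ for all $i\in D$. The computation above shows the opposite strict inequality holds for every $i\in D$ and every $r$, so no such blocking $D$ exists; every agent in $D$ in fact strictly prefers $C$. As this holds for each coalition $C$ of the partition and every resolution, the partition is internally stable under all resolutions, i.e.\ N-INS. I expect essentially no obstacle here — the only points that need explicit mention are that cliques contain no internal enemy or stranger edges (making the resolution irrelevant to the internal computation) and that EOHGS utility is strictly increasing in clique size. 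This mirrors the contrapositive argument of Proposition~\ref{prop:FOHGS_SCCNInternalStab}, but is cleaner: unlike a strongly connected component, a clique has no internal stranger edges at all, so pessimistic and optimistic internal utilities coincide and size-monotonicity is immediate.
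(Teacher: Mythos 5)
Your proof is correct and follows essentially the same route as the paper's: both rest on the observation that a clique coalition contains no internal enemy or stranger edges, so every agent's utility from it (or from any subset of it) is resolution-independent and strictly increasing in coalition size, whence no proper subset $D\subset C$ can ever block. The only cosmetic difference is that you argue directly via the explicit computation $u_i(C)=|C|-1 > |D|-1 = u_i(D)$, whereas the paper phrases the identical counting argument ($E_i\cap C = S_i\cap C=\emptyset$ and $|F_i\cap D|<|F_i\cap C|$) as a contradiction.
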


\begin{proof}
In a directed graph, a clique is a subset $C\subseteq N$ such that $\forall i,j\in C$ the edges $(i,j)$ and $(j,i)$ exist.

Suppose for some EOHGS $G$ there is a partition of cliques $\gamma$ that is not N-INS.
There must $\exists C\in \gamma$ and some $D\subset C$ such that, for some resolution of strangers, $\forall i\in D$, $i$ is possibly better off in $D$ than in $C$.
Since all $C\in \gamma$ are cliques on the friendship graph, we know that $E_i\cap D = E_i\cap C = \emptyset$, $S_i\cap C = \emptyset$, and $|F_i\cap D| < |F_i\cap C|$.
Thus, $\forall C\in \gamma$, $\forall D\subset C$, $\forall i\in D$, $u_i(D) < u_i(C)$ necessarily holds, which contradicts the claim that there exists some $C\in \gamma$, $D\subset C$, such that $\forall i\in D$, $u_i(D) > u_i(C)$ possibly holds.
Thus a partition of cliques on the friendship graph is always N-INS for EOHGS.
\qed
\end{proof}

\section{Conclusions and Future Work}
We introduced friend-oriented hedonic games with strangers (FOHGS) and enemy-oriented hedonic games with strangers (EOHGS), new extensions to the friend- and enemy-oriented hedonic games models (FOHG and EOHG respectively).
We showed that possible and necessary stability verification for FOHGS and EOHGS is in the same complexity class as stability verification for FOHG and EOHG for several notions of stability.
We outlined straightforward possible stability guarantees for FOHGS and EOHGS.
We provided several results characterizing the hardness of checking necessary stability existence for several stability notions, but leave additional characterization for future work.
Future work may also include the development of heuristics to verify group-based notions of possible and necessary stability (e.g. N-CS) in FOHGS and EOHGS as well as heuristics to check the existence of individual-based notions of necessary stability (e.g. N-NS) in asymmetric FOHGS and EOHGS.



%
%
%
\bibliographystyle{splncs04}
\bibliography{prima_sub}
%





\appendix

\section{Appendix A: Section 4 Proofs}


Proof for Lemma \ref{lem:FOHGS_MinMax}.

\begin{proof}
Suppose that $r^-$, does not provide the absolute minimum utility for agents evaluating $\gamma$. When every edge $(i,j) \in S$ is resolved, it must become either friendship or enmity, meaning that any resolution where not all edges become enemy relationships must contain at least one edge which becomes friendship instead.
Consider some pair of agents $i,k\in N\times N$ such that $k\in S_i$.
Let $r_k$ be some resolution where $(i,k)$ and becomes friendship and all other stranger relations become enmity.
Now suppose that $u_i^{r^-}(\gamma(i)) > u_i^{r_k}(\gamma(i))$.
Consider the utility $i$ derives from $\gamma$ under both proposed resolutions.
In both cases, we begin by determining $|F_i\cap \gamma(i)|$ then multiplying that by $|N|$; next subtract $|E_i\cap \gamma(i)|$ from the result, so to determine whether $u_i^{r^-}(\gamma(i)) > u_i^{r_k}(\gamma(i))$ we can subtract $(|N|\cdot|F_i\cap \gamma(i)| - |E_i\cap \gamma(i)|)$ from both sides and focus on the utility derived from stranger edges.

Given that the utility an agent derives from stranger edges is defined by $\sum_{j\in S_i\cap \gamma(i)} v_i(j)$ and that all strangers become enemies under $r^-$, we can simplify the utility equation to $\sum_{j\in S_i\cap \gamma(i)} -1 = -|S_i\cap \gamma(i)|$ for $r^-$.
The minimum possible value for this equation is $-(n-1)$ when $S_i = \{j:j\in N\setminus \{i\}\}$ and $\gamma(i)$ is the grand coalition.
When computing the utility $i$ derives from stranger edges under $r_k$, we must keep in mind that $v_i(k) = |N|$. If $k\in \gamma(i)$, then $\sum_{j\in S_i\cap \gamma(i)} v_i(j) = |N| - (|S_i\cap \gamma(i)| - 1)$ under $r_k$, which is greater than the $-|S_i\cap \gamma(i)|$ obtained under $r^-$.
If, instead $k\notin \gamma(i)$, then $\sum_{j\in S_i\cap \gamma(i)} -1 = -|S_i\cap \gamma(i)|$ under $r_k$; this is equivalent to the value obtained under $r^-$, but critically is \textit{not} less than the value obtained under $r^-$.
It follows that, for any other $j\in S_i\setminus \{k\}$, treating $j$ as a friend cannot decrease the utility that $i$ derives from $\gamma(i)$.
Thus $u_i^{r^-}(\gamma(i))$ is always the minimum possible value $\forall i\in N$.

Next consider the resolution where all edges in $S$ become friendship, $r^+$. We now prove that $r^+$ always produces the maximum possible value for $u_i^{r^+}(\gamma(i)) \mbox{ } \forall i\in N$. Our proof follows the same general setup as the proof that $u_i^{r^-}(\gamma(i))$ minimizes the utility of some agent $i$.
We consider some resolution $r_k$ under which there exists a pair of agents $i,k\in N\times N$ such that $(i,k)$ becomes enmity and all other stranger relations become friendship.
The value of $\sum_{j\in S_i\cap \gamma(i)} v_i(j)$ under $r^+$ is $\sum_{j\in S_i\cap \gamma(i)} |N| = |N|\cdot |S_i\cap \gamma(i)|$.
Under $r_k$, the value of $\sum_{j\in S_i\cap \gamma(i)} v_i(j)$ depends on whether $k\in \gamma(i)$ or not.
If $k\in \gamma(i)$ then $\sum_{j\in S_i\cap \gamma(i)} v_i(j) = (|N|-1)\cdot |S_i\cap \gamma(i)| - 1$ which is strictly less than $|N|\cdot |S_i\cap \gamma(i)|$.
If $k\notin \gamma(i)$ then $\sum_{j\in S_i\cap \gamma(i)} v_i(j) = |N|\cdot |S_i\cap \gamma(i)|$ which is equal to the value obtained under $r^+$.
It follows that treating any other $j\in S_i\setminus\{k\}$ as an enemy instead of a friend can have no positive impact on $u_i(\gamma(i))$, so we conclude that $u_i^{r^+}(\gamma(i))$ is the maximum utility $\forall i\in N$.

An analogous argument holds for EOHGS where the magnitudes of the positive and negative values are exchanged.
\end{proof}

\setcounter{proposition}{4} 



Proof for Proposition \ref{prop:FEOHGS_PCoreVerif_coNP}.
\begin{proof}
    We begin by defining the resolution $r'$ s.t.,
    given FOHGS or EOHGS instance $G=(N,F,E,S)$ and partition of $N$, $\gamma$,
    each edge $(i,j)\in S$ becomes friendship if $\gamma(i)=\gamma(j)$, and otherwise becomes enmity.
    Note that this construction guarantees that $\forall i\in N$, $u_i^{r'}(\gamma(i))=u_i^{r^+}(\gamma(i))$.


    To construct $r'$, we first observe that since $\bigcup_{C\in \gamma}C=N$, it's possible to traverse $\forall C\in \gamma$, $\forall i\in C$ in $O(|N|)$ time, so the nested traversal is roughly equivalent to $\forall i\in N$.
    Recall that $\forall i\in N$, there are $|N|-1$ agents $j\in N: j\neq i$, so $|S_i|\leq |N|-1$.
    Given a coalition $C$ and some agent $j\in S_i$, we can determine whether $j\in C$ in $O(|N|)$ time since $\forall C\in \gamma$, $|C|\leq |N|$.
    Assigning a value to edge $(i,j)\in r'$ is assumed to be a $O(1)$ operation, given prior knowledge of $i$ and $j$.
    Thus, $r'$ can be computed in polynomial (specifically $O(|N|^2)$) time.

    Having defined $r'$ and shown that it can be computed efficiently, we now observe that if no coalition possibly blocks $\gamma$ under $r'$, then $\gamma$ is, by definition, P-CS; this immediately implies that $\gamma$ is also P-INS.
    Using a similar argument, if no coalition possibly weakly blocks $\gamma$ under $r'$, then $\gamma$ is P-SCS.
    This argument holds for both FOHGS and EOHGS.
    
    Now suppose that $\gamma$ is not P-CS.
    It may be reasonable to assume that this means that some necessary blocking coalition $C$ exists. It is certainly true that the existence of a necessarily blocking coalition is sufficient to disprove the claim that $\gamma$ is P-CS.
    However, while the condition is sufficient, it does not immediately follow that it is necessary and sufficient.
    
    
    Suppose that, instead of being necessarily blocked by a single coalition $C$, $\gamma$ is instead possibly blocked by a set of coalitions in such a way that no resolution exists where $\gamma$ is P-CS.
    More formally, suppose that $\exists\mathbb{C}=\{C_1, ..., C_k\}: \forall C\in \mathbb{C}, C\subseteq N$ such that for every resolution of $S$, $\gamma$ is blocked by at least one $C\in \mathbb{C}$, but $\nexists C\in \mathbb{C}$ that necessarily blocks $\gamma$.
    If such a scenario is possible, then P-CS verification can be a $\Sigma_2^P$ problem instead of a coNP problem.

    In this proof, we show that the first supposition, that $\gamma$ is P-CS if no necessarily blocking coalition exists, is correct.
    This is shown by proving that any coalition $C\subseteq N$ that blocks $\gamma$ under $r'$ also necessarily blocks $\gamma$.
    An immediate result of this is that some set $\mathbb{C}$, as outlined above, cannot exist, because at least one member coalition $C$ would have to block $\gamma$ under $r'$ and, as we will soon see, that $C$ would be a necessarily blocking coalition.

    Given some $C$ that possibly (weakly) blocks $\gamma$ under $r'$, we can show that $C$ (weakly) blocks $\gamma$ under any resolution by repeatedly applying the following lemma.

    \begin{lemma}
    \label{lem:rp_edge_flips}
    Assume coalition $C$ (weakly) blocks partition $\gamma$ under $r'$.
    Then $C$ also (weakly) blocks $\gamma$ under a resolution where:
    \begin{enumerate}
        \item one friend edge $(i,j)\in r'$ (thus $\gamma(i)=\gamma(j)$) turns into an an enemy edge, or
        \item one enemy edge $(i,j)\in r'$ (thus $\gamma(i)\neq \gamma(j)$) turns into a friend edge.
    \end{enumerate}
    \end{lemma}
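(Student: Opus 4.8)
The plan is to track, for each agent $i \in C$, the \emph{blocking margin} $u_i^r(C) - u_i^r(\gamma(i))$ and show that neither of the two allowed flips can ever decrease it. The central observation I would exploit is that a stranger edge $(i,j) \in S$ records only how $i$ values $j$, so re-resolving $(i,j)$ changes only $u_i$; furthermore it alters $u_i^r(C)$ only when $j \in C$ and alters $u_i^r(\gamma(i))$ only when $j \in \gamma(i)$. Because $r'$ was constructed so that within-coalition strangers become friends and across-coalition strangers become enemies, the two flips in the statement are exactly the ways an arbitrary resolution can disagree with $r'$, and I claim both move $i$'s margin in $C$'s favor. Since a flip touches at most one agent's utility, if that agent lies outside $C$ the blocking condition is trivially preserved, so only the case where the flipped edge's source $i$ lies in $C$ needs work.

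First I would handle case (1): a friend edge $(i,j) \in r'$, so $\gamma(i) = \gamma(j)$ and hence $j \in \gamma(i)$, turning into an enemy edge. This lowers $v_i(j)$ (from $|N|$ to $-1$ in FOHGS, from $1$ to $-|N|$ in EOHGS), so $u_i^r(\gamma(i))$ decreases. If $j \in C$ as well, $u_i^r(C)$ drops by the \emph{identical} amount and the margin is unchanged; if $j \notin C$, then $u_i^r(C)$ is untouched and the margin strictly increases. Either way the margin does not decrease. For case (2), an enemy edge $(i,j) \in r'$ with $\gamma(i) \neq \gamma(j)$ (so $j \notin \gamma(i)$) turns into a friend edge: $u_i^r(\gamma(i))$ is unaffected, while $u_i^r(C)$ rises if $j \in C$ and is unchanged otherwise, so the margin again weakly increases. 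In both cases only $u_i$ moves, so every other agent's margin is untouched.

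Finally I would assemble these local facts into the (weak-)blocking conclusion. Since each allowed flip leaves every margin in $\{u_k^r(C) - u_k^r(\gamma(k))\}_{k \in C}$ non-decreasing, any strict inequality survives (giving core blocking), and for weak blocking the universal weak inequalities are preserved while the single affected margin is the only one that can change, so whichever agent was strictly better off remains so and no strict inequality is destroyed. The argument depends only on the \emph{signs} of the utility changes, not their magnitudes, which is why it applies uniformly to FOHGS and EOHGS. I expect the main obstacle to be the bookkeeping that makes this monotonicity simultaneous: one must check that in every subcase the decrease in $u_i^r(\gamma(i))$ is never paired with a strictly larger decrease in $u_i^r(C)$ (and symmetrically for the enemy-to-friend flip), which is precisely what the $j \in C$ versus $j \notin C$ split guarantees. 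Because this reasoning refers only to whether $\gamma(i)=\gamma(j)$ and whether $j \in C$ — both fixed features of $\gamma$ and $C$ rather than of the current resolution — it holds from any intermediate resolution, so the enclosing proof may apply the lemma once per edge on which an arbitrary $r$ disagrees with $r'$ to conclude that $C$ necessarily (weakly) blocks $\gamma$.
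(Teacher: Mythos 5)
Your proposal is correct and follows essentially the same argument as the paper: the identical four-way case analysis (flip type crossed with $j \in C$ versus $j \notin C$), showing each agent's utility difference between $C$ and $\gamma(i)$ can only stay equal or increase, which preserves both blocking and weak blocking. Your explicit "margin" formulation and your closing remark that the argument depends only on $\gamma$ and $C$ (not the current resolution) — thereby licensing the repeated application used in the enclosing proof — are nice clarifications, but not a different route.
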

    
    \begin{proof}
        First, we note that the proposed changes only affect $i$ and $j$.
        If neither $i$ nor $j$ is included in $C$, the changes do not matter.
        By symmetry, let us assume $i\in C$.
        Next, we'll check whether $i$ still has an incentive to join $C$ after the above changes.

        For case 1, first assume that $j\in C$.
        Then, for $i$, the difference between the utilities for $C$ and $\gamma(i)$ remains unchanged.
        Thus, $i$ still has an incentive to join $C$ if $C$ blocks $\gamma$ under $r'$.
        Further, $i$ remains at least indifferent if $C$ weakly blocks $\gamma$.
        Next, assume $j\notin C$.
        Then, for $i$, her utility for $C$ does not change, but the utility for $\gamma(i)$ decreases.
        Thus, $i$ still has an incentive to join $C$ if $C$ blocks $\gamma$ under $r'$.
        Further, $i$ gains a strict incentive to join if $C$ weakly blocks $\gamma$ under $r'$.

        For case 2, first assume that $j\in C$.
        Then, for $i$, their utility for $C$ increases, while their utility for $\gamma(i)$ remains unchanged.
        Thus, $i$ still has an incentive to join $C$ if $C$ blocks $\gamma$ under $r'$.
        Further, $i$ gains a strict incentive to join if $C$ weakly blocks $\gamma$ under $r'$.
        Next, assume $j\notin C$.
        Then, for $i$, her utility for $C$ and $\gamma(i)$ remains unchanged.
        Thus, $i$ still has an incentive to join $C$ if $C$ blocks $\gamma$ under $r'$.
        Further, $i$ remains at least indifferent if $C$ weakly blocks $\gamma$.
    \end{proof}

    The findings thus far are sufficient to prove that the P-CS and P-SCS verification problem is in coNP for both FOHGS and EOHGS; however, we still need to prove that this is the case for P-INS.

    \textbf{P-INS Verification}
    
    First, we observe that if $\gamma$ is not P-INS, then $\exists C\in \gamma$ such that $\exists D\subset C$ where $D$ necessarily blocks $C$, and thus also $\gamma$.
    Further, $D$ would also be a necessarily blocking coalition for the purposes of P-CS evaluation.
    Since we already know that P-CS verification is in coNP, it follows that P-INS verification is also contained in coNP.

    Combining the above findings, we conclude that P-SCS, P-CS, and P-INS verification are contained in coNP.
\end{proof}

\section{Appendix B: Section 5 Proof Sketches}
Proof sketches for:
\begin{itemize}
    \item Theorem 5 \ref{thm:Symm_FOHGS_NNS_PT}
    \item Theorem 6 \ref{thm:Asymm_FOHGS_NIS_NPC}
    \item Theorem 7 \ref{thm:Symm_FOHGS_NCIS_PT}
    \item Theorem 8 \ref{thm:Asymm_FOHGS_NCIS_NPC}
    \item Theorem 9 \ref{thm:SymmFOHGS_NoCoreS}
    \item Theorem 10 \ref{thm:EOHGS_SymmCondNoGuarantee}
\end{itemize}

\textbf{Proof sketch for Theorem \ref{thm:Symm_FOHGS_NNS_PT}.}
\begin{proof}
    We begin by introducing a set of necessary and sufficient conditions for an N-NS partition to exist for any symmetric FOHGS, demonstrate that the conditions can be checked in polynomial time, then prove that the conditions are necessary and sufficient.
    \qed
\end{proof}

\textbf{Proof sketch for Theorem \ref{thm:Asymm_FOHGS_NIS_NPC}.}
\begin{proof}
    We construct reductions from Exact Cover by 3 Sets (EC3) similar to those used by 
    Brandt, Bullinger, and Tappe (2022)
    in their proofs that NS existence is NP-complete for FOHG and EOHG. 
    First, we observe that the reductions used by 
    Brandt, Bullinger, and Tappe (2022)
    in the proofs showing the hardness of NS existence can't be used as-is.
    Thus, we modify the reductions used by 
    Brandt, Bullinger, and Tappe (2022)
    to prove the hardness of N-IS existence in FOHGS and EOHGS.
    \qed
\end{proof}

\textbf{Proof sketch for Theorem \ref{thm:Symm_FOHGS_NCIS_PT}.}
\begin{proof}
    We begin by introducing a set of necessary and sufficient conditions for an N-CIS partition to exist for any symmetric FOHGS, briefly observe that the conditions can be checked in polynomial time, then prove that the conditions are necessary and sufficient; conditions being that, for all $i\in N$ one of the following must hold:
    $F_i\neq \emptyset$ OR
    $\forall j\in S_i$, $F_j\neq \emptyset$ AND $\exists k: F_k\neq \emptyset \wedge i\in E_k$
    \qed
\end{proof}

\textbf{Proof sketch for Theorem \ref{thm:Asymm_FOHGS_NCIS_NPC}.}
\begin{proof}
    We construct another reduction from EC3 inspired by proofs by 
    Brandt, Bullinger, and Tappe (2022) and further built off of the proof for Theorem \ref{thm:Asymm_FOHGS_NIS_NPC}.
    \qed
\end{proof}

\textbf{Proof sketch for Theorem \ref{thm:SymmFOHGS_NoCoreS}.}
\begin{proof}
    We show that a FOHGS based on Figure \ref{fig:NoNecessarySCS} (a) has no N-CS partition. In brief, the game is of sufficiently small size that a deterministic analysis of all possible partitions can be conducted, ultimately proving that no N-CS partition exists for the described FOHGS instance.
    \qed
\end{proof}

\textbf{Proof sketch for Theorem \ref{thm:EOHGS_SymmCondNoGuarantee}.}
\begin{proof}
    We show that a FOHGS based on Figure \ref{fig:NoNecessarySCS} (b) has no N-CS partition. Similarly to the proof sketch for Theorem \ref{thm:SymmFOHGS_NoCoreS}, a deterministic analysis of all possible partitions is feasible for an instance of this size. After conducting such an analysis, we find that no N-CS partition exists for the described EOHGS instance.
    \qed
\end{proof}

The full extent of these proofs has been made available via \href{https://www.hostize.com/v/QMyKkT8VI3}{Hostize}.

\end{document}